\documentclass[onecolumn,12pt,draftcls]{IEEEtran}
\IEEEoverridecommandlockouts

\usepackage{amsmath}
\usepackage{amssymb}
\usepackage{amsthm}
\usepackage{dsfont}
\usepackage{enumitem}
\usepackage{bm}
\usepackage{hyperref}


\usepackage{graphicx}
\usepackage{xcolor}

\newcommand{\ind}[1]{\mathds{1}\{#1\}}
\newcommand{\E}[3]{\mathbb{E}^{#2} _{#1}  \left[ #3 \right]}
\newcommand{\Var}[3]{\mathrm{Var}^{#2} _{#1} \left( #3 \right)}
\newcommand{\Prob}[3]{\mathbb{P}^{#2} _{#1} \left\{ #3 \right\}}

\newcommand{\floor}[1]{\left\lfloor #1 \right\rfloor}

\newcommand{\abs}[1]{\left| #1 \right|}
\newcommand{\R}{\mathbb{R}}
\newcommand{\mc}{\mathcal}

\newcommand{\td}{\Tilde}
\newcommand{\KL}[2]{D(#1||#2)}
\newcommand{\eps}{\epsilon}

\newcommand{\FAR}[2]{\mathrm{FAR} ^{#1} \left( #2 \right)}
\newcommand{\WADD}[2]{\mathrm{WADD} ^{#1} \left( #2 \right)}

\DeclareMathOperator*{\esssup}{ess\,sup}

\newtheorem{theorem}{Theorem}[section]

\newtheorem{lemma}[theorem]{Lemma}

\theoremstyle{definition}
\newtheorem{definition}{Definition}[section]

\theoremstyle{remark}
\newtheorem*{remark}{Remark}    

\begin{document}

\title{Non-Parametric Quickest Mean Change Detection
}

\author{Yuchen Liang, ~\IEEEmembership{Student Member,~IEEE}, and  Venugopal V. Veeravalli, ~\IEEEmembership{Fellow, ~IEEE}

\thanks{The material in this paper was presented in part at the Conference on Information Sciences and Systems (CISS) (online) in 2021  \cite{ciss2021}.}

\thanks{This work  was supported in part by the National Science Foundation under grant  ECCS-2033900, and by the Army Research Laboratory under Cooperative Agreement W911NF-17-2-0196, through the University of Illinois at Urbana-Champaign.}

\thanks{The authors are with the ECE Department
of the University of Illinois at Urbana-Champaign. Email: \{yliang35, vvv\}@ILLINOIS.EDU. 
}
}
\maketitle

\vspace*{-0.5in}

\begin{abstract}
The problem of quickest detection of a change in the mean of a sequence of independent observations is studied. The pre-change distribution is assumed to be stationary, while the post-change distributions are allowed to be non-stationary. The case where the pre-change distribution is known is studied first, and then the extension where only the mean and variance of the pre-change distribution are known. No knowledge of the post-change distributions is assumed other than that their means are above some pre-specified threshold larger than the pre-change mean. For the case where the pre-change distribution is known, a test is derived that asymptotically minimizes the worst-case detection delay over all possible post-change distributions, as the false alarm rate goes to zero. Towards deriving this asymptotically optimal test, some new results are provided for the general problem of asymptotic minimax robust quickest change detection in non-stationary settings.  Then, the limiting form of the optimal test is studied as the gap between the pre- and post-change means goes to zero, called the Mean-Change Test (MCT). It is shown that the MCT can be designed with only knowledge of the mean and variance of the pre-change distribution. The  performance of the MCT is also characterized when the mean gap is moderate, under the additional assumption that the distributions of the observations have bounded support. The analysis is validated through numerical results for detecting a change in the mean of a beta distribution. The use of the MCT in monitoring pandemics is also demonstrated.

\end{abstract}

\begin{IEEEkeywords}
Quickest change detection (QCD), non-parametric methods, minimax robust detection, non-stationary observations.
\end{IEEEkeywords}

\section{Introduction}
Quickest change detection (QCD) is a fundamental problem in mathematical statistics (see, e.g., \cite{vvv_qcd_overview} for an overview). Given a stochastic sequence whose distribution changes at some unknown change-point, the goal is to detect the change after it occurs as quickly as possible, subject to false alarm constraints. The QCD framework has seen a wide range of applications, including line-outage in power systems \cite{line-outage}, dim-target manoeuvre detection \cite{Molloy2017}, stochastic process control \cite{spc}, structural health monitoring \cite{shm}, and piece-wise stationary multi-armed bandits \cite{kaufmann}. The two main formulations of the classical QCD problem are the Bayesian formulation \cite{Shiryaev:1963,bayesqcd}, where the change-point is assumed to follow a known prior distribution, and the minimax formulation \cite{lorden1971,pollak1987}, where the worst-case detection delay is minimized over all possible change-points, subject to false alarm constraints. In both the Bayesian and minimax settings, if the pre- and post-change distributions are known, low-complexity efficient solutions to the QCD problem can be found \cite{vvv_qcd_overview}.

In many practical situations, we may not know the exact distribution in the pre- or post-change regimes. While it is reasonable to assume that we can obtain a large amount of data in the pre-change regime, this may not be the case for the post-change regime. Also, in applications such  epidemic monitoring and piece-wise stationary multi-armed bandits, a change in a specific statistic (e.g., the mean) of the distribution is of interest. This is different from the original QCD problem where any distributional change needs to be detected. Furthermore, in many applications, the support of the distribution is bounded. For example, the observations representing the fraction of some specific group in the entire population are bounded between 0 and 1. This is the case, for example, in the pandemic monitoring problem that we discuss in detail in Section \ref{sec:num-res}. In many applications, including the pandemic monitoring problem, the system has usually reached some nominal steady-state distribution before the change-point. In these situations, the pre-change distribution can be assumed to be stationary. 

In this paper, we study the problem of quickest detection of a change in the mean of a sequence of independent observations.
The pre-change distribution is assumed to be stationary, while the post-change distributions are allowed to be non-stationary. We first study the case where the pre-change distribution is known, and then study the extension where only the mean and variance of the pre-change distribution are known. No knowledge of the post-change distributions is assumed other than that their means are above some threshold larger than the pre-change mean.

There have been a number of lines of work on the QCD problem when the pre- and/or post-change distributions are not completely known. The most prevalent is the generalized likelihood ratio (GLR) approach,  introduced in \cite{lorden1971} for the parametric case where the post-change distribution has an unknown parameter. This GLR approach is studied in detail for the problem of detecting the change in the mean of a Gaussian distribution with unknown post-change mean in \cite{siegmund1995}. A GLR test for the case where the pre- and post-change distributions come from an one-parameter exponential family, and both the pre- and post-change parameters are unknown, is analyzed in \cite{LaiXing2010}.

The QCD problem has also been studied in a non-parametric setting. In particular, for detecting a change in the mean of an observation sequence, one approach has been to use maximum \emph{scan statistics}. The scan statistic of an observation sequence is defined as the absolute difference of the averages before and after a potential change-point. In \cite{dark-mean-change}, the case where the pre- and post-change distributions have finite moment generating functions in some neighborhood around zero is considered. 
At each time greater than a window size $N$, the scan statistic at each potential change-point is calculated using the last $N$ observations. The maximum scan statistic is then calculated over the set of potential change-points, and an alarm is raised if this maximum exceeds some threshold. 
In \cite{maillard19a}, the case of sub-Gaussian pre- and post-change distributions is studied. The scan statistic is calculated over the entire observation sequence, and the maximum is compared to a threshold determined by the current time and the desired false alarm rate. This approach is further applied to the piece-wise stationary multi-armed bandit problem in \cite{kaufmann}. We compare our approach to mean-change detection with a test using scan statistics in Section \ref{sec:num-res}.

We note that for both the GLR  the scan statistics approaches, the complexity of computing the test statistic at each time-step grows at least linearly with the number of samples. In practice, a windowed version of the test statistic is often used to reduce computational complexity, while suffering some loss in performance. 

Still another line of work is the one based on a \textit{minimax} robust approach \cite{huber1965}, in which it is assumed that the distributions come from mutually exclusive uncertainty classes. Under certain conditions on the uncertainty classes, e.g., joint stochastic boundedness \cite{moulin-veeravalli-2018},  low-complexity solutions to the minimax robust QCD problem can be found \cite{Unnikrishnan2011}. Under more general conditions, e.g., weak stochastic boundedness, a solution that is asymptotically close to the minimax solution can be found \cite{Molloy2017}.

In this paper, we use an asymptotic version of the minimax robust QCD problem formulation \cite{Molloy2017} to develop algorithms for the non-parametric detection of a change in mean of an observation sequence. Our contributions are as follows:
\begin{enumerate}
    \item We extend the asymptotic minimax robust QCD problem introduced in \cite{Molloy2017} to the more general non-stationary setting.
    
    \item We study the problem of quickest detection of a change in the mean of an observation sequence under the assumption that no knowledge of the post-change distribution is available other than that 
    its mean is above some threshold larger than the pre-change mean.

    \item For the case where the pre-change distribution is known, we derive a test that asymptotically minimizes the worst-case detection delay over all possible post-change distributions, as the false alarm rate goes to zero.
    
    \item We study the limiting form of the optimal test as the gap between the pre- and post-change means goes to zero, which we call the Mean-Change Test (MCT). We show that the MCT can be designed with only knowledge of the mean and variance of the pre-change distribution.
    
    \item We also characterize the performance of the MCT when the mean gap is moderate, under the assumption that the distributions of the observations have bounded support. 

    \item We validate our analysis through numerical results for detecting a change in the mean of a beta distribution. We also demonstrate the use of the MCT for pandemic monitoring.
\end{enumerate}

The rest of the paper is structured as follows. In Section~\ref{sec:qcd-distuc}, we describe the quickest change detection problem under distributional uncertainty and provide some new results regarding asymptotically robust tests in the non-stationary setting. In Section~\ref{sec:mc-prob}, we formulate the mean change detection problem, and propose and analyze the mean-change test (MCT), which solves the problem asymptotically.
In Section \ref{sec:num-res}, we validate our analysis through numerical results for detecting a change in the mean of a beta distribution, and also demonstrate the use of the MCT in monitoring pandemics. Finally, in Section~\ref{sec:concl}, we provide some concluding remarks.

\section{Quickest Change Detection Under Distributional Uncertainty}
\label{sec:qcd-distuc}

Let $X_1,\dots,X_t,\dots \in \R$ be a sequence of independent random variables, and let $\nu$ be a change-point. Let $\bm{P_0}=\{P_{0,t}\}_{t \geq 1}$ and $\bm{P_1}=\{P_{1,t}\}_{t \geq 1}$ be two sequences of probability measures, where $P_{0,t} \in {\cal P}_0$ and $P_{1,t} \in {\cal P}_1$ for all $t \geq 1$. Further, assume that $P_{j,t}$ has  probability density $p_{j,t}$ with respect to the Lebesgue measure on $\R$, for $j=0,1$ and $t \geq 1$. Let $\Prob{\nu}{\bm{P_0},\bm{P_1}}{\cdot}$ denote the probability measure on the entire sequence of observations when the pre-change distributions are $\{P_{0,t}\}_{t < \nu}$ and the post-change distributions are $\{P_{1,t}\}_{t \geq \nu}$,  with $X_t \sim P_{0,t},\ \forall 1 \leq t < \nu$ and $X_t \sim P_{1,t},\ \forall t \geq \nu$, and let $\E{\nu}{\bm{P_0},\bm{P_1}}{\cdot}$ denote the corresponding expectation.
When $\bm{P_0}$ and $\bm{P_1}$ are stationary, i.e., $P_{0,t}= P_0$, $\forall t \geq 1$ and $P_{1,t}= P_1$, $\forall t \geq 1$, we use the notations $\Prob{\nu}{P_0,P_1}{\cdot}$ and $\E{\nu}{P_0,P_1}{\cdot}$ in place of $\Prob{\nu}{\bm{P_0},\bm{P_1}}{\cdot}$ and $\E{\nu}{\bm{P_0},\bm{P_1}}{\cdot}$, respectively.


The change-time $\nu$ is assumed to be unknown but deterministic. The problem is to detect the change quickly while not causing too many false alarms. Let $\tau$ be a stopping time \cite{moulin-veeravalli-2018} defined on the observation sequence associated with the detection rule, i.e. $\tau$ is the time at which we stop taking observations and declare that the change has occurred.

For the case where both the pre- and post-change distributions are stationary and known, Lorden \cite{lorden1971} proposed solving the following optimization problem to find the best stopping time $\tau$:
\begin{equation}
\label{orig_qcd}
    \inf_{\tau \in \mathcal{C}^{P_0}_\alpha} \WADD{P_0,P_1}{\tau}
\end{equation}
where
\begin{equation}
\label{wadd1}
    \WADD{P_0,P_1}{\tau} :=\\
    \sup_{\nu \geq 1} \esssup \E{\nu}{P_0,P_1}{\left(\tau-\nu+1\right)^+|X_1,\ldots, X_{\nu-1}}
\end{equation}
is a worst-case delay metric, and
\begin{equation}
    \mathcal{C}^{P_0}_\alpha := \left\{ \tau: \FAR{P_0}{\tau} \leq \alpha \right\}
\end{equation}
with  
\begin{equation}
    \FAR{P_0}{\tau} := \frac{1}{ \E{\infty}{P_0,P_1}{\tau}}.
 \end{equation}   
Here $\E{\infty}{P_0,P_1}{\cdot}$ is the expectation operator when the change never happens, and $(\cdot)^+:=\max\{0,\cdot\}$.

Lorden also showed that Page's Cumulative Sum (CuSum) algorithm \cite{page1954} whose test statistic is given by:
\begin{align}
\label{cusum_stat}
    \Lambda^{P_0,P_1}(t) &= \max_{1\leq k \leq t+1} \sum_{i=k}^t \ln L^{P_0,P_1}(X_i) \nonumber\\
    &= \left(\Lambda^{P_0,P_1}(t-1) + \ln L^{P_0,P_1}(X_t) \right)^+
\end{align}
solves the problem in (\ref{orig_qcd}) asymptotically.
Here $L^{P_0,P_1}$ is the likelihood ratio:
\begin{equation}\label{eq:LRdef}
 L^{P_0,P_1}(x) = \frac{p_1(x)}{p_0(x)}.
\end{equation}
The CuSum stopping rule is given by:
\begin{equation}
\label{defstoppingrule}
    \tau\left(\Lambda^{P_0,P_1}, b_\alpha\right) := \inf \{t:\Lambda^{P_0,P_1}(t)\geq b_\alpha \}
\end{equation}
where $b_\alpha := |\ln \alpha|$. It was shown by Moustakides \cite{moustakides1986} that the CuSum algorithm is exactly optimal for the problem in (\ref{orig_qcd}).

When the pre-change and post-change distributions are unknown but belong to known uncertainty sets and are possibly non-stationary, a minimax robust formulation can be used in place of \eqref{orig_qcd}:
\begin{equation}
\label{newmainprob}
    \inf_{\tau \in {\cal C}_\alpha^{{\cal P}_0}}\quad  \sup_{(\bm{P_0},\bm{P_1}): (P_{0,t},P_{1,t}) \in {\cal P}_0 \times {\cal P}_1 ,\forall t} \WADD{\bm{P_0},\bm{P_1}}{\tau}
\end{equation}
where
\begin{equation}
    \WADD{\bm{P_0},\bm{P_1}}{\tau} :=\\
    \sup_{\nu \geq 1} \esssup \E{\nu}{\bm{P_0},\bm{P_1}}{\left(\tau-\nu+1\right)^+|X_1,\ldots, X_{\nu-1}}
\end{equation}
and the feasible set is defined as
\begin{equation}
\label{main_fa_constraint}
    \mathcal{C}_{\alpha}^{\mc{P}_0} = \left\{ \tau: \sup_{\bm{P_0}: P_{0,t} \in \mathcal{P}_0} \FAR{\bm{P_0}}{\tau} \leq \alpha \right\}
\end{equation}
with  
\begin{equation}
   \FAR{\bm{P_0}}{\tau} := \frac{1}{ \E{\infty}{\bm{P_0},\bm{P_1}}{\tau}}.
 \end{equation}

We now address the solution to the problem in \eqref{newmainprob}. To this end, we give the following using definitions.

\begin{definition} (see, e.g., \cite{moulin-veeravalli-2018}) A pair of uncertainty sets $(\mc{P}_0,\mc{P}_1)$ is said to be \textit{jointly stochastically (JS) bounded}  by $(\Bar{P}_0,\Bar{P}_1) \in \mc{P}_0 \times \mc{P}_1$ if, for any $(P_0,P_1) \in \mc{P}_0 \times \mc{P}_1$ and any $h > 0$,
\begin{align}
    P_0 \{L^{\Bar{P}_0,\Bar{P}_1}(X) > h\} &\leq \Bar{P}_0 \{L^{\Bar{P}_0,\Bar{P}_1}(X) > h\}\\
    P_1 \{L^{\Bar{P}_0,\Bar{P}_1}(X) > h\} &\geq \Bar{P}_1 \{L^{\Bar{P}_0,\Bar{P}_1}(X) > h\}
\end{align}
where $L^{\Bar{P}_0,\Bar{P}_1}$ is the likelihood ratio between $\Bar{P}_1$ and $\Bar{P}_0$ (see \eqref{eq:LRdef}). The distributions $\Bar{P}_0$ and  $\Bar{P}_1$ are called least favorable distributions (LFDs) within the classes ${\cal P}_0$ and ${\cal P}_1$, respectively. 
\end{definition}

If the pair of pre- and post-change uncertainty sets is JS bounded, the CuSum test statistic $\Lambda^{\Bar{P}_0,\Bar{P}_1}(t)$ (see \eqref{cusum_stat}), with stopping rule $\tau(\Lambda^{\Bar{P}_0,\Bar{P}_1},b_\alpha)$ (see \eqref{defstoppingrule}), solves (\ref{newmainprob}) exactly both when $\bm{P_0}$ and $\bm{P_1}$ are stationary \cite{Unnikrishnan2011} and when they are potentially non-stationary \cite{molloy-non-stat}.

\begin{definition} (see \cite{Molloy2017})
A pair of uncertainty sets $(\mc{P}_0,\mc{P}_1)$ is said to be \textit{weakly stochastically (WS) bounded} by $(\td{P}_0,\td{P}_1) \in \mc{P}_0 \times \mc{P}_1$ if
\begin{equation}
\label{wsb:kl}
    \KL{\td{P}_1}{\td{P}_0} \leq \KL{P_1}{\td{P}_0} - \KL{P_1}{\td{P}_1}
\end{equation}
for all $P_1 \in \mc{P}_1$, and
\begin{equation}
\label{wsb:exp}
    \E{}{P_0}{L^{\td{P}_0,\td{P}_1}(X)} \leq \E{}{\td{P}_0}{L^{\td{P}_0,\td{P}_1}(X)} = 1
\end{equation}
for all $P_0 \in \mc{P}_0$. Here, $\E{}{P}{\cdot}$ denotes the expectation operator with respect to distribution $P$, and $\KL{P}{Q}$ denotes KL-divergence:
\begin{equation}\label{eq:KLdef}
\KL{P}{Q} = \E{}{P}{\ln L^{P,Q} (X)}.
\end{equation}
\end{definition}

It is shown in \cite{Molloy2017} that if the pair of uncertainty sets is JS bounded by $(\Bar{P}_0,\Bar{P}_1)$, it is also WS bounded by $(\Bar{P}_0,\Bar{P}_1)$. It is also shown in \cite{Molloy2017} that if the pair of pre- and post-change uncertainty sets is WS bounded, the CuSum test statistic $\Lambda^{\td{P}_0,\td{P}_1}(t)$ with stopping rule $\tau(\Lambda^{\td{P}_0,\td{P}_1},b_\alpha)$ solves (\ref{newmainprob}) asymptotically as $\alpha \to 0$ when $\bm{P_0}$ and $\bm{P_1}$ are both stationary. 

\subsection{Asymptotically Optimal Solution in the Non-stationary Setting}

Let $\td{P}_0,\td{P}_1$ be such that ${\cal P}_0 \times {\cal P}_1$ is WS bounded by $(\td{P}_0,\td{P}_1)$. In the following, we extend the result in \cite{Molloy2017} to the case where $\bm{P_0}$ and $\bm{P_1}$ are potentially non-stationary and derive an asymptotically optimal solution as $\alpha \to 0$. Specifically, through Lemma \ref{molloy_delay_ext} we upper bound the asymptotic delay, through Lemma \ref{molloy_fa_ext} we control the false alarm rate, and in Theorem \ref{asymp_opt_non_stat} we combine the lemmas to provide an asymptotically optimal solution to the problem in \eqref{newmainprob} when $\bm{P_0}$ and $\bm{P_1}$ are potentially non-stationary.

\begin{lemma}
\label{molloy_delay_ext}
Consider ${\cal P}_0 \times {\cal P}_1$ WS bounded by $(\td{P}_0,\td{P}_1)$. Let $\bm{P_0}$ and $\bm{P_1}$ be such that $P_{0,t} \in {\cal P}_0$ and $P_{1,t} \in {\cal P}_1$ for all $t \geq 1$. Suppose that for all $P_{1,t} \in {\cal P}_1$, 
\begin{equation*}
\sup_{1 \leq t \leq n} \Var{}{P_{1,t}}{\ln L^{\td{P}_0,\td{P}_1} (X_t)} = o(n) \text{ as $n \to \infty$}
\end{equation*}
where $\Var{}{P}{X}$ denotes the variance of $X$ when $X \sim P$. Then, $\tau(\Lambda^{\td{P}_0,\td{P}_1},b)$ satisfies
\begin{equation}
    \WADD{\bm{P_0},\bm{P_1}}{\tau(\Lambda^{\td{P}_0,\td{P}_1},b)} \leq (1+o(1))\left(\frac{b}{\KL{\td{P}_1}{\td{P}_0}}\right)
\end{equation}
as $b \to \infty$, where $o(1) \to 0$ as $b \to \infty$.
\end{lemma}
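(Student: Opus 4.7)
The plan is to extend the asymptotic delay argument of \cite{Molloy2017} from the stationary to the non-stationary post-change setting. First, I would reduce the conditional WADD expression to an unconditional first-passage time expectation. Restricting the maximum in \eqref{cusum_stat} to $k=\nu$ yields $\Lambda^{\td{P}_0,\td{P}_1}(t) \geq \sum_{i=\nu}^{t} \ln L^{\td{P}_0,\td{P}_1}(X_i)$ for every $t \geq \nu$, so defining
\begin{equation*}
\tau_b^{(\nu)} := \inf\left\{ n \geq 1 : \sum_{i=\nu}^{\nu+n-1} \ln L^{\td{P}_0,\td{P}_1}(X_i) \geq b \right\},
\end{equation*}
we have $\tau(\Lambda^{\td{P}_0,\td{P}_1},b) - \nu + 1 \leq \tau_b^{(\nu)}$. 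Because $\tau_b^{(\nu)}$ depends only on $X_\nu, X_{\nu+1},\ldots$, which are independent of $X_1,\ldots,X_{\nu-1}$, the essential supremum over the pre-change history collapses to the unconditional expectation, and it suffices to show $\sup_{\nu \geq 1}\E{\nu}{\bm{P_0},\bm{P_1}}{\tau_b^{(\nu)}} \leq (1+o(1))\,b/D$, with $D := \KL{\td{P}_1}{\td{P}_0}$.

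Second, I would extract a positive mean drift from \eqref{wsb:kl}: rewriting the inequality in terms of log-likelihoods gives $\E{}{P_{1,t}}{\ln L^{\td{P}_0,\td{P}_1}(X_t)} \geq D$ for every $P_{1,t}\in\mc{P}_1$. Writing $Z_i := \ln L^{\td{P}_0,\td{P}_1}(X_i)$, each increment has mean at least $D>0$ under $P_{1,i}$, with variance controlled by the lemma's hypothesis. Fix $\delta > 0$ and set $N_b := \lceil(1+\delta)b/D\rceil$. For any $n \geq N_b$, Chebyshev's inequality gives
\begin{equation*}
\Prob{\nu}{\bm{P_0},\bm{P_1}}{\tau_b^{(\nu)}>n} \;\leq\; \Prob{\nu}{\bm{P_0},\bm{P_1}}{\sum_{i=\nu}^{\nu+n-1}Z_i<b} \;\leq\; \frac{\sum_{i=\nu}^{\nu+n-1}\Var{}{P_{1,i}}{Z_i}}{(nD-b)^2},
\end{equation*}
which vanishes as $b \to \infty$. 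Splitting $\E{\nu}{\bm{P_0},\bm{P_1}}{\tau_b^{(\nu)}} = \sum_{n\geq 0}\Prob{\nu}{\bm{P_0},\bm{P_1}}{\tau_b^{(\nu)}>n}$ into the parts $n\leq N_b$ (contributing at most $N_b$) and $n > N_b$ (contributing $o(b)$ via a block-wise Chebyshev bound that exploits the independence of sums over disjoint length-$N_b$ windows), I would obtain $\E{\nu}{\bm{P_0},\bm{P_1}}{\tau_b^{(\nu)}} \leq N_b + o(b)$; sending $\delta\downarrow 0$ finishes the argument.

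The main obstacle will be the uniformity of the $o(1)$ term over $\nu$: the variance hypothesis is stated in terms of global time, so a window starting at large $\nu$ inherits a $\nu$-dependent variance estimate. However, since the Chebyshev denominator scales like $b^2$ and the total variance over a length-$N_b$ window is $N_b\cdot o(N_b)$ by the hypothesis, the $\nu$-dependence can be absorbed into the vanishing $o(1)$ term, yielding a delay bound that is uniform in $\nu$ and hence in the sup defining $\WADD{\bm{P_0},\bm{P_1}}{\cdot}$.
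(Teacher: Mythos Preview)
Your proposal is correct and follows essentially the same route as the paper's proof: both reduce to a one-sided first-passage problem for the partial sums $\sum_i \ln L^{\td P_0,\td P_1}(X_i)$, invoke the WS-boundedness inequality \eqref{wsb:kl} to extract a uniform drift $\E{}{P_{1,t}}{\ln L^{\td P_0,\td P_1}(X_t)}\ge D$, use Chebyshev with the variance hypothesis to make each length-$N_b$ block cross $b$ with probability close to~$1$, and then exploit independence across disjoint blocks to turn this into a geometric tail and hence an expectation bound of order $N_b$. The paper packages the Chebyshev step as a separate weak-law lemma (Lemma~\ref{wlln_non_stat}) and parameterizes the block length as $n_c=\lfloor b/((1-\delta)D)\rfloor$ rather than your $N_b=\lceil(1+\delta)b/D\rceil$, but these are cosmetic differences; your uniformity-in-$\nu$ concern is handled in the paper implicitly by the fact that the variance hypothesis is assumed for \emph{every} admissible sequence $\{P_{1,t}\}\subset\mc P_1$, hence in particular for any time-shift of the given one.
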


\begin{lemma}
\label{molloy_fa_ext}
Consider the same assumptions as in Lemma~\ref{molloy_delay_ext}. Then, for any $P_{0,t} \in {\cal P}_0$,
\begin{equation}
    \E{}{\bm{P_0}}{\tau(\Lambda^{\td{P}_0,\td{P}_1},b)} \geq e^b
\end{equation}
for any threshold $b > 0$.
\end{lemma}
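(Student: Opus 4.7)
The plan is to exploit the exponential-supermartingale structure supplied by the WS-boundedness condition \eqref{wsb:exp}. Setting $Z_t := \ln L^{\td{P}_0,\td{P}_1}(X_t)$ and $S_t := \sum_{i=1}^t Z_i$ (with $S_0 := 0$), the inequality $\E{}{P_{0,t}}{e^{Z_t}} = \E{}{P_{0,t}}{L^{\td{P}_0,\td{P}_1}(X_t)} \leq 1$ holds for every $P_{0,t} \in \mc{P}_0$ by \eqref{wsb:exp}. Combined with the independence of the observations, this makes $M_t := e^{S_t}$ a non-negative supermartingale under $\E{}{\bm{P_0}}{\cdot}$, uniformly over all admissible $\bm{P_0}$.

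Next, I would perform the standard Lorden cycle decomposition of the CuSum stopping time. Using the identity $\Lambda^{\td{P}_0,\td{P}_1}(t) = S_t - \min_{0 \leq j \leq t} S_j$, define $\sigma_0 := 0$ and
\begin{equation*}
\sigma_k := \inf\left\{t > \sigma_{k-1} : S_t - S_{\sigma_{k-1}} \geq b \ \text{or}\ S_t \leq S_{\sigma_{k-1}}\right\}, \quad k \geq 1,
\end{equation*}
and let $N := \inf\{k \geq 1 : S_{\sigma_k} - S_{\sigma_{k-1}} \geq b\}$ (with $N = \infty$ if no such $k$ exists). One checks by induction that $S_{\sigma_{k}}$ remains the running minimum of $(S_j)_{0 \leq j \leq \sigma_k}$ whenever $k < N$, so that the CuSum triggers precisely at $\sigma_N$; hence $\tau(\Lambda^{\td{P}_0,\td{P}_1},b) = \sigma_N \geq N$. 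It therefore suffices to show $\E{}{\bm{P_0}}{N} \geq e^b$.

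The crux is the per-cycle success-probability bound. Conditional on $\mc{F}_{\sigma_{k-1}}$ (on $\{\sigma_{k-1} < \infty\}$), the shifted process $\bigl(e^{S_{\sigma_{k-1}+j} - S_{\sigma_{k-1}}}\bigr)_{j \geq 0}$ is a non-negative supermartingale starting at $1$, by the independence of the observations together with the uniform bound $\E{}{P_{0,t}}{e^{Z_t}} \leq 1$. Doob's maximal inequality at the level $e^b$ then yields
\begin{equation*}
\Prob{}{\bm{P_0}}{S_{\sigma_k}-S_{\sigma_{k-1}} \geq b \,\bigm|\, \mc{F}_{\sigma_{k-1}}} \leq e^{-b}.
\end{equation*}
Iterating this conditional bound, $\Prob{}{\bm{P_0}}{N > k} \geq (1-e^{-b})^k$ for every $k \geq 0$, so that $\E{}{\bm{P_0}}{N} = \sum_{k \geq 0}\Prob{}{\bm{P_0}}{N > k} \geq \sum_{k \geq 0}(1-e^{-b})^k = e^b$. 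The one step worth checking carefully, and the main obstacle introduced by the non-stationary setting, is the transfer of the supermartingale property across the random time $\sigma_{k-1}$; this goes through because the post-$\sigma_{k-1}$ observations are independent of $\mc{F}_{\sigma_{k-1}}$ and each individual $Z_t$ still satisfies the uniform inequality $\E{}{P_{0,t}}{e^{Z_t}} \leq 1$ regardless of the (possibly time-varying) choice of $P_{0,t} \in \mc{P}_0$.
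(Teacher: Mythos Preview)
Your proposal is correct and follows essentially the same route as the paper's proof: both exploit the WS-boundedness condition \eqref{wsb:exp} to make $\exp\left(\sum_{i} \ln L^{\td P_0,\td P_1}(X_i)\right)$ a non-negative supermartingale under any $\bm{P_0}$, apply Doob's maximal inequality to bound each cycle's ``success'' probability by $e^{-b}$, and conclude via the geometric tail bound $\Prob{}{\bm{P_0}}{N>k}\ge(1-e^{-b})^k$. The only cosmetic difference is that the paper defines its ladder epochs one-sidedly (as the times when the post-$\sigma_m$ partial sum first goes strictly negative) and lets $M$ count cycles until the partial sum ever exceeds $b$, whereas you use a two-sided exit time for each cycle; both lead to the same inequality $\tau\ge N$ (resp.\ $\tau\ge M$) and the same tail computation.
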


\begin{theorem}
\label{asymp_opt_non_stat}
Consider the same assumptions as in Lemma~\ref{molloy_delay_ext}. Then, the CuSum test $\tau(\Lambda^{\td{P}_0,\td{P}_1},b_\alpha)$ solves the problem in (\ref{newmainprob}) asymptotically as $\alpha \to 0$, and
\begin{equation}
    \sup_{(\bm{P_0},\bm{P_1}): (P_{0,t},P_{1,t}) \in {\cal P}_0 \times {\cal P}_1 ,\forall t} \WADD{\bm{P_0},\bm{P_1}}{\tau(\Lambda^{\td{P}_0,\td{P}_1},b_\alpha)} = (1+o(1)) \left(\frac{|\ln \alpha|}{\KL{\td{P}_1}{\td{P}_0}}\right)
\end{equation}
where $o(1) \to 0$ as $\alpha \to 0$.
\end{theorem}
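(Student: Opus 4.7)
The plan is to combine the two preceding lemmas to establish the upper bound on the worst-case WADD and then invoke a standard universal lower bound argument to show this upper bound is in fact asymptotically tight.

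First, I would handle the false alarm constraint. Setting $b = b_\alpha = |\ln \alpha|$ in Lemma~\ref{molloy_fa_ext} immediately gives $\E{}{\bm{P_0}}{\tau(\Lambda^{\td{P}_0,\td{P}_1}, b_\alpha)} \geq e^{b_\alpha} = 1/\alpha$ for every $\bm{P_0}$ with $P_{0,t} \in {\cal P}_0$ for all $t$. Taking the supremum over such $\bm{P_0}$ therefore yields $\sup_{\bm{P_0}} \FAR{\bm{P_0}}{\tau(\Lambda^{\td{P}_0,\td{P}_1}, b_\alpha)} \leq \alpha$, so the proposed stopping rule lies in ${\cal C}_\alpha^{{\cal P}_0}$. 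Next, Lemma~\ref{molloy_delay_ext} applied at $b = b_\alpha$ gives the pointwise-in-$(\bm{P_0},\bm{P_1})$ bound $\WADD{\bm{P_0},\bm{P_1}}{\tau(\Lambda^{\td{P}_0,\td{P}_1}, b_\alpha)} \leq (1+o(1))\,|\ln\alpha|/\KL{\td{P}_1}{\td{P}_0}$, with the $o(1)$ term uniform in the choice of $(\bm{P_0},\bm{P_1})$ because the bound in Lemma~\ref{molloy_delay_ext} depends on the post-change sequence only through the variance condition, which is assumed to hold uniformly over ${\cal P}_1$. Taking the supremum over all feasible $(\bm{P_0},\bm{P_1})$ thus establishes the claimed upper bound.

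For asymptotic optimality, I need a matching lower bound on the minimax delay. The key observation is that the stationary pair $(\td{P}_0,\td{P}_1)$ lies in ${\cal P}_0 \times {\cal P}_1$, so the constant sequences $\bm{P_0} \equiv \td{P}_0$, $\bm{P_1} \equiv \td{P}_1$ are feasible and the supremum in \eqref{newmainprob} is at least $\WADD{\td{P}_0,\td{P}_1}{\tau}$ for any $\tau \in \mathcal{C}_\alpha^{\mathcal{P}_0}$. Since any $\tau \in \mathcal{C}_\alpha^{\mathcal{P}_0}$ in particular satisfies $\FAR{\td{P}_0}{\tau} \leq \alpha$, I can apply Lai's universal asymptotic lower bound \cite{lai1998} (which holds whenever the log-likelihood ratio satisfies a suitable law of large numbers under the post-change distribution — guaranteed here by the finite KL divergence and the variance condition already assumed) to obtain
\begin{equation*}
\inf_{\tau \in \mathcal{C}_\alpha^{\mathcal{P}_0}} \WADD{\td{P}_0,\td{P}_1}{\tau} \geq (1+o(1))\,\frac{|\ln\alpha|}{\KL{\td{P}_1}{\td{P}_0}}
\end{equation*}
as $\alpha \to 0$. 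Combining this with the upper bound obtained above sandwiches the minimax worst-case delay and forces equality in the asymptotic sense, completing the proof.

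The main obstacle I anticipate is verifying uniformity of the $o(1)$ term in the upper bound across the (non-stationary) pairs $(\bm{P_0},\bm{P_1})$; this is essentially a matter of inspecting the proof of Lemma~\ref{molloy_delay_ext} and confirming that only the WS-bounding constants and the uniform variance bound enter, rather than any quantity depending on the particular $(P_{0,t},P_{1,t})$. A secondary, mostly bookkeeping concern is justifying Lai's lower bound for the specific pair $(\td{P}_0,\td{P}_1)$; since this pair is stationary and has finite KL divergence, the standard statement applies directly, so no extension is required.
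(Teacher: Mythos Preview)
Your proposal is correct and essentially matches the paper's proof: both use Lemma~\ref{molloy_fa_ext} to put $\tau(\Lambda^{\td{P}_0,\td{P}_1},b_\alpha)$ in ${\cal C}_\alpha^{{\cal P}_0}$, Lemma~\ref{molloy_delay_ext} for the uniform upper bound, and the feasibility of the stationary pair $(\td{P}_0,\td{P}_1)$ together with the classical lower bound/CuSum optimality \cite{lai1998} to close the gap. The paper packages the lower-bound step as a max--min inequality chain (steps $(ii)$--$(v)$) rather than invoking Lai's bound directly as you do, but the content is the same.
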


The proofs of Lemma~\ref{molloy_delay_ext}, Lemma~\ref{molloy_fa_ext} and Theorem~\ref{asymp_opt_non_stat} are given in the appendix.

\section{Mean-Change Detection Problem}
\label{sec:mc-prob}

Until now, we have considered the general QCD problem formulated in \eqref{newmainprob}. In this paper, we are mainly interested in a special case of the problem, described as follows.
The pre-change distribution is stationary, i.e., $P_{0,t}=P_0, \forall t \geq 1$, with pre-change mean $\mu_0 = \E{}{P_0}{X}$ and variance $\sigma_0^2 = \Var{}{P_0}{X}$. Thus, $\mc{P}_0 = \{ P_0 \}$ is a singleton.
The post-change distribution could be non-stationary, and at each time it belongs to the following uncertainty set:
\begin{equation}
\label{f1constraint}
    \mathcal{P}_1 = \mathcal{M}_1 := \{ P: \E{}{P}{X} \geq \eta > \mu_0\}.
\end{equation}
In this expression, $X$ denotes a generic observation in the sequence, and $\eta$ is a pre-designed threshold. Define 
\begin{equation}
    \Delta := \frac{\eta-\mu_0}{2}
\end{equation}
which is half of the worst-case mean-change gap.

The minimax robust mean-change problem, which is a reformulation of \eqref{newmainprob} is given by:
\begin{equation}
\label{mean_change_prob}
    \inf_{\tau \in {\cal C}_\alpha^{P_0}}\quad  \sup_{\bm{P_1}: P_{1,t} \in {\cal M}_1 ,\forall t} \WADD{P_0,\bm{P_1}}{\tau}.
\end{equation}
Our goal is to find a stopping time that solves \eqref{mean_change_prob} asymptotically as the false alarm rate $\alpha \to 0$.

\subsection{Known Pre-change Distribution}
\label{subsec:known_pre}

Define
\begin{equation}
\label{eq:kappa0_def}
    \kappa_0 (\lambda) = \ln \E{}{P_0}{e^{\lambda X}}   
\end{equation} 
to be the cumulant-generating function (cgf) of the observations under $P_0$. In the following theorem, we provide a solution to the problem stated in \eqref{mean_change_prob}.

\begin{theorem}
Consider ${\cal P}_0 = \{P_0\}$, and ${\cal M}_1$ as given in (\ref{f1constraint}). Define
\begin{equation}
\label{tilteddistr}
    \td{p}_1(x) = p_0(x) e^{\lambda^* x - \kappa_0 (\lambda^*)}
\end{equation}
where $\kappa_0 (\lambda)$ is the cgf under $P_0$ and $\lambda^*$ satisfies
\begin{equation}
\label{lambdastar}
    \kappa_0'(\lambda^*) = \frac{\E{}{P_0}{X e^{\lambda^* X}}}{\E{}{P_0}{e^{\lambda^* X}}} = \eta
\end{equation}
Then, the CuSum statistic
\begin{equation}
\label{opt_stat}
    \Lambda^{P_0,\td{P}_1}(t) = \max_{1\leq k \leq t+1} \sum_{i=k}^t \left(\lambda^* X_i - \kappa_0(\lambda^*) \right)
\end{equation}
and the stopping rule $\tau(\Lambda^{P_0,\td{P}_1},b_\alpha)$ (see \eqref{defstoppingrule}) with threshold $b_\alpha=|\ln{\alpha}|$ solves the minimax robust problem in \eqref{mean_change_prob} asymptotically as $\alpha \to 0$, and
\begin{equation}
    \inf_{\tau \in {\cal C}_\alpha^{P_0}}\quad  \sup_{\bm{P_1}: P_{1,t} \in {\cal M}_1 ,\forall t} \WADD{P_0,\bm{P_1}}{\tau} = \frac{|\ln{\alpha}|}{\lambda^* \eta - \kappa_0(\lambda^*)} (1+o(1))
\end{equation}
\end{theorem}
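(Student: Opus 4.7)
The plan is to cast the theorem as an instance of Theorem~\ref{asymp_opt_non_stat} applied to the pair $(\mc{P}_0,\mc{M}_1)=(\{P_0\},\mc{M}_1)$ with LFDs $\td{P}_0:=P_0$ and $\td{P}_1$ as defined in \eqref{tilteddistr}, and then to produce a matching asymptotic lower bound. The key observation driving everything is that the tilted form \eqref{tilteddistr} makes the log-likelihood ratio affine in $x$, namely $\ln L^{P_0,\td{P}_1}(x) = \lambda^* x - \kappa_0(\lambda^*)$, so that the two conditions defining WS-boundedness collapse to statements about means.

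First I would verify that $(\{P_0\},\mc{M}_1)$ is WS-bounded by $(P_0,\td{P}_1)$. Condition \eqref{wsb:exp} is trivial since $\mc{P}_0$ is a singleton. For \eqref{wsb:kl}, a direct computation using the affine form of the log-likelihood yields
\begin{equation*}
\KL{P_1}{P_0} - \KL{P_1}{\td{P}_1} = \E{}{P_1}{\ln \td{p}_1(X)/p_0(X)} = \lambda^* \E{}{P_1}{X} - \kappa_0(\lambda^*),
\end{equation*}
while the same manipulation gives $\KL{\td{P}_1}{P_0} = \lambda^*\eta - \kappa_0(\lambda^*)$ by virtue of \eqref{lambdastar}. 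Since $\kappa_0$ is convex with $\kappa_0'(0)=\mu_0<\eta$, we have $\lambda^*>0$, so the required inequality reduces to $\E{}{P_1}{X}\geq\eta$, which is exactly the definition of $\mc{M}_1$. I would then invoke Lemma~\ref{molloy_delay_ext} and Theorem~\ref{asymp_opt_non_stat} (working under the implicit assumption that $\Var{}{P_{1,t}}{X_t}$ is uniformly bounded, so that the required variance condition $\sup_{t\leq n}\Var{}{P_{1,t}}{\lambda^* X_t - \kappa_0(\lambda^*)}=o(n)$ is immediate) to conclude
\begin{equation*}
\sup_{\bm{P_1}:P_{1,t}\in\mc{M}_1,\forall t} \WADD{P_0,\bm{P_1}}{\tau(\Lambda^{P_0,\td{P}_1},b_\alpha)} \leq (1+o(1))\,\frac{|\ln\alpha|}{\lambda^*\eta - \kappa_0(\lambda^*)},
\end{equation*}
while substituting $\td{p}_1$ into \eqref{cusum_stat} yields the explicit statistic \eqref{opt_stat}.

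For the matching lower bound, I would restrict the supremum in \eqref{mean_change_prob} to the stationary post-change sequence $P_{1,t}\equiv\td{P}_1$, which lies in $\mc{M}_1$ because $\E{}{\td{P}_1}{X}=\kappa_0'(\lambda^*)=\eta$. The classical Lai lower bound for stationary QCD with known pre- and post-change distributions then gives
\begin{equation*}
\inf_{\tau\in\mc{C}_\alpha^{P_0}} \WADD{P_0,\td{P}_1}{\tau} \geq (1+o(1))\,\frac{|\ln\alpha|}{\KL{\td{P}_1}{P_0}} = (1+o(1))\,\frac{|\ln\alpha|}{\lambda^*\eta - \kappa_0(\lambda^*)},
\end{equation*}
and sandwiching the two bounds completes the proof.

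The main obstacle is the WS-boundedness step: one has to recognize that the exponential tilt \eqref{tilteddistr} simultaneously solves the Chernoff-type variational problem $\inf\{\KL{P}{P_0}: \E{}{P}{X}\geq\eta\}$ and produces an \emph{affine} log-likelihood ratio, which is precisely what lets the KL-inequality in \eqref{wsb:kl} reduce to the defining mean constraint of $\mc{M}_1$. Once WS-boundedness is in hand, the rest is assembly: the upper bound is Theorem~\ref{asymp_opt_non_stat}, and the lower bound follows from specializing to a single worst-case stationary post-change law and applying the standard Lai bound.
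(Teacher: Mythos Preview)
Your proposal is correct and reaches the same conclusion, but the route to WS-boundedness differs from the paper's. The paper does not verify \eqref{wsb:kl} directly; instead it invokes \cite[Prop.~1(iii)]{Molloy2017}, which says that for convex $\mc{M}_1$ and singleton $\mc{P}_0$, WS-boundedness holds whenever $\td{P}_1$ is the $I$-projection $\arg\min_{P_1\in\mc{M}_1}\KL{P_1}{P_0}$. It then solves that variational problem via a Lagrangian and a Gateaux-derivative computation to identify the exponential tilt \eqref{tilteddistr}. Your approach bypasses both the external proposition and the calculus-of-variations machinery: because $\ln L^{P_0,\td{P}_1}(x)=\lambda^* x-\kappa_0(\lambda^*)$ is affine, both sides of \eqref{wsb:kl} are linear in the mean of the candidate distribution, and the inequality collapses to $\E{}{P_1}{X}\geq\eta$. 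This is more elementary and self-contained, and it makes transparent \emph{why} the tilt is the right choice. Your separate lower-bound step (specializing to stationary $\td{P}_1$ and applying Lai's bound) is sound but redundant: Theorem~\ref{asymp_opt_non_stat} already delivers both asymptotic optimality and the value $|\ln\alpha|/\KL{\td{P}_1}{P_0}$, so once WS-boundedness is established you may simply cite it, as the paper does. Finally, you are right to flag the variance hypothesis of Lemma~\ref{molloy_delay_ext} as an implicit assumption; the paper's proof applies Theorem~\ref{asymp_opt_non_stat} without commenting on it.
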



\begin{proof}
The proof follows from an application of Theorem~\ref{asymp_opt_non_stat} if we can establish that $\mc{P}_0 \times {\cal M}_1$ is WS bounded by $(P_0, \td{P}_1)$. By \cite[Prop.~1 (iii)]{Molloy2017}, since ${\cal M}_1$ is convex and $\mc{P}_0$ is a singleton, if $\td{P}_1$   minimizes the KL-divergence $\KL{P_1}{P_0}$ over $P_1 \in {\cal M}_1$, then $\mc{P}_0 \times {\cal M}_1$ is WS bounded by $(P_0, \td{P}_1)$. Therefore, it remains to show that $\td{P}_1$ specified in \eqref{tilteddistr} minimizes $\KL{P_1}{P_0}$, subject to $\E{}{P_1}{X} \geq \eta$. To this end, we follow the procedure outlined in \cite[Sec.~6.4.1]{levy-detection}. Consider the Lagrangian
\begin{align}
    J(p_1,\lambda,\mu) &= \E{}{P_1}{\ln L^{P_0,P_1}(X)} + \lambda (\eta - \E{}{P_1}{X}) + \mu \left(1 - \int p_1(x) d d x \right) \nonumber\\
    &= \int \left(\ln \frac{p_1(x)}{p_0(x)} - \lambda x - \mu \right) p_1(x) d x + \lambda \eta + \mu 
\end{align}
where the Lagrange multiplier $\lambda \geq 0$ corresponds to the constraint that the post-change mean is greater than $\eta$, and $\mu$ corresponds to the constraint that $p_1(x)$ is a probability measure. 
For an arbitrary direction $z$, we take the Gateaux derivative with respect to $p_1$:
\begin{align}
    \nabla_{p_1,z} J(p_1,\lambda,\mu) &:= \lim_{h \to 0} \frac{J(p_1+h z,\lambda,\mu)-J(p_1,\lambda,\mu)}{h} \nonumber\\
    &= \int \left(\ln \frac{p_1(x)}{p_0(x)} - \lambda x - \mu'\right) z d x
\end{align}
where $\mu' = \mu - 1$, and since $z$ is arbitrary, we arrive at
\begin{equation}
    \ln \frac{p_1(x)}{p_0(x)} - \lambda x - \mu' = 0
\end{equation}
By the Generalized Kuhn–Tucker Theorem \cite{luenburger-opt-by-vec-space}, since $p_0 (x)$ is bounded, $p_1 (x) = p_0(x) e^{\lambda x+\mu'}$ is a necessary condition for optimality. Furthermore, since $J(p_1,\lambda,\mu)$ is convex in $p_1$, this is also a global optimum. To satisfy the constraints, we have
\begin{equation}
    \mu' = -\ln \int p_0 (x) e^{\lambda x} d x = - \kappa_0 (\lambda)
\end{equation}
and that $\lambda^*$ satisfies
\begin{equation}
    \eta = \E{}{P_1}{X} = \frac{\E{}{P_0}{X e^{\lambda^* X}}}{\E{}{P_0}{e^{\lambda^* X}}} = \kappa_0'(\lambda^*)
\end{equation}
Thus, $\td{P}_1$ in (\ref{tilteddistr}) minimizes $\KL{P_1}{P_0}$, subject to $\E{}{P_1}{X} \geq \eta$.

Furthermore, the minimum KL-divergence is
\begin{align}
    \KL{\td{P}_1}{P_0} &= \int (\lambda^* x - \kappa_0(\lambda^*)) \td{p}_1 (x) d x \nonumber\\
    &= \lambda^* \eta - \kappa_0(\lambda^*)
\end{align}
Hence, the worst-case delay satisfies
\begin{align}
    \inf_{\tau \in {\cal C}_\alpha^{P_0}}\quad  \sup_{\bm{P_1}: P_{1,t} \in {\cal M}_1 ,\forall t} \WADD{P_0,\bm{P_1}}{\tau} &= \frac{|\ln{\alpha}|}{\KL{\td{P}_1}{P_0}}(1+o(1)) \nonumber\\
    &= \frac{|\ln{\alpha}|}{\lambda^* \eta - \kappa_0(\lambda^*)}(1+o(1))
\end{align}
as $\alpha \to 0$. \qedhere
\end{proof}

Note that $\td{p}_1$ is an exponentially-tilted version (or the Esscher transform) of $p_0$.

\subsection{Approximation for Small $\Delta$}
\label{subsec:approx_small_delta}

Even though we have an expression for the test statistic when $P_0$ is known, as given in (\ref{opt_stat}), the exact solution of $\lambda^*$ is not available in closed-form. Fortunately, if the mean-change gap $\Delta$ is small, we obtain a low-complexity test in terms of only the pre-change mean and variance that closely approximates the performance of the asymptotically minimax optimal test in the previous section.

As $\Delta \to 0$, $\eta \to \mu_0$, and hence $\lambda^* \to 0$. From a second-order Taylor expansion on $\kappa_0$ around 0, we obtain
\begin{align}
\label{taylor}
    \kappa_0(\lambda^*) &= \kappa_0(0) + \kappa_0'(0) \lambda^* + \frac{\kappa_0''(0)}{2} (\lambda^*)^2 + o((\lambda^*)^2) \nonumber\\
    &= \mu_0 \lambda^* + \frac{\sigma_0^2}{2} (\lambda^*)^2 + o((\lambda^*)^2)
\end{align}
In this same regime, by continuity of $\kappa_0'(\cdot)$,
\begin{align}
    \lambda^* &= \frac{\kappa_0'(\lambda^*) - \kappa_0'(0)}{\kappa_0''(0)} + o(\Delta) \nonumber\\
    &= \frac{\eta - \mu_0}{\sigma_0^2} + o(\Delta) \nonumber\\
    &= \frac{2 \Delta}{\sigma_0^2} + o(\Delta)
\end{align}
where we have used $\kappa_0'(\lambda^*) = \eta$. Hence, the approximate log-likelihood ratio at time $t$ is
\begin{align}
    \lambda^* X_t - \kappa_0(\lambda^*) &= \lambda^* X_t - (\mu_0 \lambda^* + \frac{\sigma_0^2}{2} (\lambda^*)^2) + o((\lambda^*)^2) \nonumber\\
    &= \frac{2 \Delta}{\sigma_0^2} (X_t - \mu_0) - \frac{\sigma_0^2}{2} \left(\frac{2 \Delta}{\sigma_0^2}\right)^2 + o(\Delta^2) \nonumber\\
    &= \frac{2 \Delta}{\sigma_0^2} \left(X_t - \frac{\mu_0 + \eta}{2}\right) + o(\Delta^2)
\end{align}
and the corresponding minimum KL-divergence is approximated as:
\begin{equation}
\label{kldivapprox}
    \KL{\td{P}_1}{P_0} = \frac{2\Delta^2}{\sigma_0^2} + o(\Delta^2).
\end{equation}
Now
\begin{equation}
    \frac{2 \Delta}{\sigma_0^2} \left(X_t - \frac{\mu_0 + \eta}{2}\right) > b_\alpha \iff X_t - \frac{\mu_0 + \eta}{2} > \td{b}_\alpha
\end{equation}
where
\begin{equation}
\label{balpha}
    \td{b}_\alpha := \frac{|\ln{\alpha}| \sigma_0^2}{2\Delta} = \frac{|\ln{\alpha}| \sigma_0^2}{\eta - \mu_0}.
\end{equation}
Therefore, the stopping rule $\tau(\Lambda^{P_0,\td{P}_1}, b_\alpha)$ can be approximated by the stopping rule $\tau(\td{\Lambda}^{\mu_0,\eta}, \td{b}_\alpha)$, where 
\begin{align}
\label{stat}
    \td{\Lambda}^{\mu_0,\eta} (t) &= \max_{1\leq k \leq t+1} \sum_{i=k}^t \left(X_i - \frac{\mu_0+\eta}{2}\right) \nonumber\\
    &= \left(\td{\Lambda}^{\mu_0,\eta} (t-1)+ \left(X_t - \frac{\mu_0+\eta}{2}\right)\right)^+
\end{align}
with $\td{\Lambda}^{\mu_0,\eta}(0)=0$. We call $\tau(\td{\Lambda}^{\mu_0,\eta}, \td{b}_\alpha)$ the Mean-Change Test (MCT), and  $\td{\Lambda}^{\mu_0,\eta}$ the MCT statistic.

From \eqref{kldivapprox}, it follows that as $\alpha \to 0$ and $\Delta \to 0$, the worst-case delay satisfies
\begin{equation}
\label{est_opt_delay}
    \inf_{\tau \in {\cal C}_\alpha^{P_0}}\quad  \sup_{\bm{P_1}: P_{1,t} \in {\cal M}_1 ,\forall t} \WADD{P_0,\bm{P_1}}{\tau} = \frac{|\ln{\alpha}| \sigma_0^2}{2\Delta^2}(1+o(1)).
\end{equation}

Therefore, if $\Delta$ is small, it is sufficient to know only the mean and variance to construct a good approximation to the asymptotically minimax robust test. Furthermore, only the mean of the pre-change distribution is needed to construct the MCT statistic. From the simulation results in Section IV, we see that the performance of the MCT can be very close to that of the asymptotically minimax robust test even for moderate values of $\Delta$. Since the mean and variance of a distribution are much easier and more accurate to estimate than the entire density, this test can be useful and accurate when only a moderate number of observations in the pre-change regime is available. 


\begin{remark}
It is interesting that the form of MCT statistic in \eqref{stat} coincides with that of the CuSum statistic (see \eqref{cusum_stat}) with known stationary pre- and post-change distributions, $P_0 \sim \mc{N} (\mu_0, \sigma^2)$ and $P_1 \sim \mc{N} (\eta, \sigma^2)$, respectively. Here $\mc{N} (\mu, \sigma^2)$ denotes a Gaussian distribution with mean $\mu$ and variance $\sigma^2$.
\end{remark}

\subsection{Performance Analysis of MCT for moderate $\Delta$}
\label{subsec:mct_analysis}

We now study the asymptotic performance of the MCT for fixed $\Delta$, as $\alpha \to 0$. For this part of the analysis, we assume that the pre- and post-change distributions have supports that are uniformly bounded, and without loss of generality, we assume that the bounding interval is $[0,1]$. This assumption holds in many practical applications, including the pandemic monitoring problem discussed in Section~\ref{sec:num-res}.

Define 
\begin{equation} \label{eq:Zidef}
Z_t := X_t - \frac{\mu_0+\eta}{2},\ \forall t \geq 1.
\end{equation}
Then the MCT statistic of \eqref{stat} can be written as:
\begin{equation}
\label{stat_Z}
    \td{\Lambda}^{\mu_0,\eta} (t) = \left(\td{\Lambda}^{\mu_0,\eta} (t-1)+ Z_t\right)^+
\end{equation}
with $\td{\Lambda}^{\mu_0,\eta}(0)=0$. The MCT stopping time is given by:
\begin{equation}\label{MCT:stopping}
\tau(\td{\Lambda}^{\mu_0,\eta}, b) = \inf \{t:\td{\Lambda}^{\mu_0,\eta} \geq b \}
\end{equation}
where $b$ has to be chosen to meet the FAR constraint:
\begin{equation} \label{eq:FAR_MCT}
\FAR{P_0}{\tau(\td{\Lambda}^{\mu_0,\eta}, b)} = \frac{1}{ \E{\infty}{P_0,\bm{P_1}}{\tau(\td{\Lambda}^{\mu_0,\eta}, b)}} \leq \alpha 
\end{equation}
In what follows, we write $\tau(\td{\Lambda}^{\mu_0,\eta}, b)$ as $\tau(b)$, with the understanding that the test statistic being used throughout is the MCT statistic $\td{\Lambda}^{\mu_0,\eta}$.


\subsubsection{False Alarm Analysis}
In Lemma~\ref{mct:cross_fa} below, we first control the boundary crossing probability of $S_t$ in the pre-change regime. Then, in Theorem~\ref{mct:fa}, we use Lemma~\ref{mct:cross_fa} to bound the false alarm rate of the MCT asymptotically using the procedure outlined in \cite{siegmund_1985}.


\begin{lemma}
\label{mct:cross_fa}
Assume that the pre-change distribution $P_0$ has known pre-change mean $\mu_0$ and variance $\sigma_0^2$, and that the post-change distribution is non-stationary with $P_{1,t} \in {\cal M}_1$, for all $t\geq 1$. For $b > 0$, define the supplementary stopping time
\begin{equation}
\label{tau_prime}
    \tau' (b) := \inf\{t: S_t \notin (0,b) \}
\end{equation}
where $S_t := \sum_{i=1}^t Z_i$, with $Z_i$ defined in \eqref{eq:Zidef}. Then,
\begin{align}
\label{nng:fa}
    \Prob{\infty}{P_0,\bm{P_1}}{S_{\tau'(b)} \geq b} &\leq 2 R_0 \sqrt{\frac{b^2}{\Delta^2}} K_1 \left(\frac{R_0^2 b \Delta }{\sigma_0^2} \right) \exp{\left(- \frac{R_0^2 \Delta}{\sigma_0^2} b \right)} \nonumber\\
    &= \sqrt{\frac{2 \pi \sigma_0^2 b}{\Delta^3}} \exp{\left(- \frac{2 R_0^2 \Delta}{\sigma_0^2} b \right)} (1+o(1)), \text{ as } b \to \infty,
\end{align}
where 
\begin{equation}
\label{rdef}
    R_0=\sigma_0^2 / \left(\sigma_0^2+\Delta \cdot \max\{\mu_0, 1-\mu_0\} / 3\right)
\end{equation}
and $K_\beta (z)$ is the modified Bessel function of the second kind of order $\beta$.
\end{lemma}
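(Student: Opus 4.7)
The plan is to apply a Chernoff–Cram\'er exponential-martingale argument to the pre-change random walk $S_t$, and then extract the modified Bessel function $K_1$ from the resulting Laplace-transform integral via the identity $\int_0^{\infty} e^{-at - b/t}\,dt = 2\sqrt{b/a}\,K_1(2\sqrt{ab})$.

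First, I observe that under $\mathbb{P}_{\infty}$ the increments $Z_t = X_t - (\mu_0+\eta)/2$ are i.i.d.\ with $\mathbb{E}_{P_0}[Z_t] = -\Delta$, $\mathrm{Var}_{P_0}(Z_t) = \sigma_0^2$, and since $X_t \in [0,1]$ the centred deviation is bounded by $c := \max\{\mu_0, 1 - \mu_0\}$. A standard one-sided Bernstein MGF inequality then yields
\[
\mathbb{E}_{P_0}\!\left[e^{\theta Z_t}\right] \;\le\; \exp\!\left(-\theta\Delta + \frac{\theta^2 \sigma_0^2/2}{1 - c\theta/3}\right)
\]
for $0 < \theta < 3/c$. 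Selecting a tilt of order $\Delta/\sigma_0^2$ produces the constant $R_0 = \sigma_0^2/(\sigma_0^2 + c\Delta/3)$ of \eqref{rdef} naturally: $R_0$ records the multiplicative correction that the bounded support imposes on the quadratic cumulant expansion compared with the Gaussian case.

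Next, at the tilt $\theta_0$ proportional to $R_0 \Delta/\sigma_0^2$, the process $M_t = \exp(\theta_0 S_t - t\,\psi_0)$, with the exponent rate $\psi_0$ coming from the Bernstein estimate, is a $\mathbb{P}_\infty$-supermartingale. Since $S_t$ has negative drift, $\tau'(b)$ is almost surely finite, and optional stopping yields
\[
\mathbb{P}_{\infty}\!\left\{S_{\tau'(b)} \ge b\right\} \;\le\; e^{-\theta_0 b}\,\mathbb{E}_{\infty}\!\left[e^{\psi_0 \tau'(b)}\,\mathbf{1}\{S_{\tau'(b)} \ge b\}\right].
\]
The residual expectation is a defective Laplace transform of the first-passage time of $S_t$ to $b$. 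Bounding it by the corresponding quantity for a matched Brownian motion with drift $-\Delta$ and variance $\sigma_0^2$ (via, e.g., a Skorokhod-type embedding or a refined CLT), whose first-passage density is the inverse-Gaussian $\frac{b}{\sqrt{2\pi \sigma_0^2 t^3}}\exp\!\left(-(b+\Delta t)^2/(2\sigma_0^2 t)\right)$, the computation reduces to an integral of the form $\int_0^{\infty} e^{-at - b_{\mathrm{int}}/t}\,dt$ with $a = R_0 \Delta^2/(2\sigma_0^2)$ and $b_{\mathrm{int}} = R_0^3 b^2/(2\sigma_0^2)$. The Bessel identity above then reproduces the first line of \eqref{nng:fa}, and the asymptotic second line follows from the large-argument expansion $K_1(z) \sim \sqrt{\pi/(2z)}\,e^{-z}$ as $z \to \infty$.

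The main obstacle I expect is the rigorous passage from the discrete random walk $S_t$ to the Brownian first-passage density that drives the Bessel integral: a naive Chernoff bound at $\theta_0$ would give pure exponential decay $e^{-\theta_0 b}$ with rate linear in $R_0$, whereas the claimed asymptotic rate $2R_0^2 \Delta /\sigma_0^2$ is quadratic in $R_0$. The $K_1$ prefactor is therefore not a consequence of Chernoff alone but emerges from averaging the exponential martingale identity against the (approximate) first-passage density, and one must be careful that the Bernstein correction $R_0$ enters consistently in both the tilt constant $\theta_0$ and the effective variance/drift of the comparison Brownian motion; balancing these two roles of $R_0$ is the technical heart of the argument.
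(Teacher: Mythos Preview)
Your martingale/optional-stopping route is not the paper's, and the step you yourself flag as the ``main obstacle'' is a genuine gap, not a technicality. Passing from the discrete walk to a Brownian first-passage density via Skorokhod embedding or a KMT-type strong approximation does not deliver the precision you need: you must recover the exact exponential rate $2R_0^2\Delta/\sigma_0^2$ together with the correct $\sqrt{b}$ prefactor, and strong-approximation error terms would swamp the sub-exponential correction you are trying to isolate. More fundamentally, the Bernstein factor $R_0$ enters only through the MGF bound on $Z_t$, not through the variance of any embedded Brownian motion, so a ``matched'' Brownian motion with drift $-\Delta$ and variance $\sigma_0^2$ does not carry the $R_0$ dependence in the right place; this is exactly the balancing difficulty you anticipate, and I do not see a way to close it along your lines.

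The paper's argument avoids optional stopping and any continuous-time comparison altogether. It starts from the crude union bound
\[
\mathbb{P}_\infty\!\left\{S_{\tau'(b)}\ge b\right\}\;\le\;\sum_{t=1}^\infty \mathbb{P}_\infty\!\left\{S_t\ge b\right\},
\]
applies Bernstein's \emph{tail} inequality to each fixed-$t$ term to get $\exp\!\bigl(-(b+t\Delta)^2/\bigl(2(t\sigma_0^2+M(b+t\Delta))\bigr)\bigr)$ with $M=\max\{\mu_0,1-\mu_0\}/3$, and then bounds the sum by an integral over $t\in(0,\infty)$. The linear substitution $y=t(\sigma_0^2+M\Delta)+Mb$ reduces this to $\int_0^\infty e^{-uy-v/y}\,dy=2\sqrt{v/u}\,K_1(2\sqrt{uv})$, and the constant $R_0=\sigma_0^2/(\sigma_0^2+M\Delta)$ drops out of the change of variables automatically. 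So the Bessel $K_1$ here is purely an artifact of integrating the Bernstein tail bound in $t$, not of any first-passage density, and the $R_0^2$ in the exponent comes from straightforward algebra rather than from reconciling two roles of a tilt parameter.
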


\begin{proof}
Note that $\E{}{P_0}{Z_i} = (\mu_0-\eta)/2 = -\Delta$. Since $X_i \in [0,1]$, we have $Z_i + \Delta \in [-\mu_0,1-\mu_0]$. Let $M = \max\{\mu_0/3, (1-\mu_0)/3\}$; then $|Z_i+\Delta| \leq 3 M$. Thus, we have
\begin{align*}
    \Prob{\infty}{P_0,\bm{P_1}}{S_{\tau'(b)} \geq b} &= P_0\left\{S_{\tau'(b)} \geq b\right\} \\
    & = P_0\left\{\sum_{i=1}^{\tau'(b)} Z_i \geq b\right\} \\
    &= \sum_{t=1}^\infty P_0\left\{\sum_{i=1}^t Z_i \geq b, t = {\tau'(b)}\right\} \\
    &\leq \sum_{t=1}^\infty P_0\left\{\sum_{i=1}^t Z_i \geq b\right\} \\
    &= \sum_{t=1}^\infty P_0\left\{\sum_{i=1}^t (Z_i+\Delta) \geq b+t\Delta\right\} \\
    &\stackrel{(i)}{\leq} \sum_{t=1}^\infty \exp \left(-\frac{(b+t\Delta)^2}{2(t \sigma_0^2 + M (b+t\Delta))}\right) \\
    &\stackrel{(ii)}{\leq} \int_0^\infty \exp \left(-\frac{(b+x \Delta)^2}{2(x \sigma_0^2 + M (b+x \Delta))}\right) d x \nonumber\\
    &= a \int_0^\infty \exp \left(-\frac{(a \Delta y + C)^2}{2y}\right) d y \\
    &= a e^{-a \Delta C} \int_0^\infty e^{-((a^2 \Delta^2 / 2)y + (C^2/2) y^{-1})} d y \\
    &\stackrel{(iii)}{=} \frac{2 C}{\Delta} e^{-a \Delta C} K_1(a \Delta C)
\end{align*}
where $a := (\sigma_0^2+M\Delta)^{-1}$ and $C := \sigma_0^2 b / (\sigma_0^2 + M \Delta)$. In the series of inequalities above, $(i)$ follows from Bernstein's inequality \cite[p. 9]{all-nonpara-stat}, $(ii)$ follows from bounding the sum with an integral, and $(iii)$ follows from Lemma \ref{int_bessel} in the appendix, with $u=a^2 \Delta^2 / 2$ and $v=C^2/2$. Since $K_1(z) = \sqrt{\frac{\pi}{2 z}} e^{-z} (1+o(1))$ as $|z| \to \infty$,
the asymptotic result follows. \qedhere
\end{proof}


\begin{theorem}
\label{mct:fa}
Under the same assumptions as in Lemma \ref{mct:cross_fa}, let $\td{b}'_\alpha$ be such that
\begin{equation}
\label{thr_set}
    \sqrt{\frac{2 \pi \sigma_0^2 \td{b}'_\alpha}{\Delta^3}} \exp{\left(- \frac{2 R_0^2 \Delta}{\sigma_0^2} \td{b}'_\alpha \right)} = \alpha.
\end{equation}
Then, the MCT with $\td{b}'_\alpha$, i.e., $\tau(\td{b}'_\alpha)$, meets the FAR constraint \eqref{eq:FAR_MCT} asymptotically 
as $\alpha \to 0$.

Furthermore, as $\alpha \to 0$,
\begin{equation}
\label{thr_set2}
    \td{b}'_\alpha = \frac{\td{b}_\alpha}{R_0^2} (1+o(1))
\end{equation}
where $\td{b}_\alpha$ is defined in \eqref{balpha} and $R_0$ is defined in \eqref{rdef}.
\end{theorem}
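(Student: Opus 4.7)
The plan is to combine Lemma~\ref{mct:cross_fa} with a standard renewal/regenerative decomposition for CuSum-type statistics (as in \cite{siegmund_1985}) in order to bound the FAR by the single-excursion boundary-crossing probability, and then to invert the resulting implicit equation asymptotically in $\alpha$.

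First I would set up the regenerative structure. Under $\Prob{\infty}{P_0,\bm{P_1}}{\cdot}$, the increments $Z_t$ are i.i.d.\ with negative mean $-\Delta$, and by the standard identity for the reflected random walk,
$$\td{\Lambda}^{\mu_0,\eta}(t) = S_t - \min_{0 \leq k \leq t} S_k.$$
Each time $S_t$ attains a new running minimum, $\td{\Lambda}^{\mu_0,\eta}$ returns to $0$, so by the strong Markov property the successive excursions of $\td{\Lambda}^{\mu_0,\eta}$ above $0$ are i.i.d.\ copies of the excursion of $S_t$ out of $(0,b)$. Let $p := \Prob{\infty}{P_0,\bm{P_1}}{S_{\tau'(b)} \geq b}$. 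The number $N$ of excursions until one crosses $b$ is geometric with parameter $p$, so $\E{\infty}{P_0,\bm{P_1}}{N} = 1/p$. Since each excursion has length at least one, $\tau(b) \geq N$, and therefore
$$\FAR{P_0}{\tau(b)} = \frac{1}{\E{\infty}{P_0,\bm{P_1}}{\tau(b)}} \leq p.$$

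Next I would plug in the asymptotic bound from Lemma~\ref{mct:cross_fa},
$$p \leq \sqrt{\frac{2\pi\sigma_0^2 b}{\Delta^3}} \exp\!\left(-\frac{2R_0^2\Delta}{\sigma_0^2} b\right)(1+o(1)),$$
and set the leading-order right-hand side equal to $\alpha$; this is precisely the defining relation \eqref{thr_set} for $\td{b}'_\alpha$, and yields $\FAR{P_0}{\tau(\td{b}'_\alpha)} \leq \alpha(1+o(1))$. For \eqref{thr_set2}, taking logarithms in \eqref{thr_set} gives
$$\frac{2R_0^2\Delta}{\sigma_0^2}\,\td{b}'_\alpha = |\ln\alpha| + \tfrac{1}{2}\ln\!\left(\frac{2\pi\sigma_0^2 \td{b}'_\alpha}{\Delta^3}\right).$$
Since $\td{b}'_\alpha \to \infty$ as $\alpha \to 0$, while the logarithmic correction grows only like $\ln|\ln\alpha|$, one iteration yields $\td{b}'_\alpha = \sigma_0^2|\ln\alpha|/(2R_0^2\Delta)\cdot(1+o(1))$, which equals $\td{b}_\alpha/R_0^2\cdot(1+o(1))$ by \eqref{balpha}.

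The main obstacle is making the regenerative argument in the first step clean: one must verify that excursion lengths are a.s.\ finite and that the geometric cycle decomposition is valid when the final excursion is truncated at $\tau'(b)$. These follow from the negative-drift condition $\E{}{P_0}{Z_1} = -\Delta < 0$ (which forces $S_t \to -\infty$ a.s., so each sub-$b$ excursion terminates in finite expected time) together with the i.i.d.\ structure of $\{Z_t\}$. The crude bound $\tau(b) \geq N$ suffices for the FAR inequality; replacing it by the sharper $\E{\infty}{}{\tau(b)} = \E{\infty}{}{N} \cdot \E{\infty}{}{\tau'(b)}$ would tighten the non-asymptotic constant in \eqref{thr_set2} without affecting the first-order asymptotic identification with $\td{b}_\alpha/R_0^2$.
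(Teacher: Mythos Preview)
Your proposal is correct and follows essentially the same route as the paper. The regenerative decomposition you spell out (geometric number of excursions, each of length $\geq 1$, giving $\E{\infty}{}{\tau(b)} \geq 1/p$) is exactly the content the paper obtains by citing \cite[Sec.~2.6]{siegmund_1985} in the form $\E{\infty}{}{\tau(b)} = \E{}{P_0}{\tau'(b)}/p \geq 1/p$; your explicit justification via the reflected-walk identity and the strong Markov property is a faithful unpacking of that citation. For the second part, the paper proceeds by writing $D := (2\Delta/\sigma_0^2)\td{b}'_\alpha - |\ln\alpha|$, hypothesizing $D = D_1|\ln\alpha| + o(|\ln\alpha|)$, and verifying $D_1 = R_0^{-2}-1$, whereas you take logarithms once and bootstrap the $O(\ln\td{b}'_\alpha)=o(|\ln\alpha|)$ correction; these are two standard ways to invert the same implicit relation and yield the same first-order answer.
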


\begin{proof}
As $\alpha \to 0$, $\td{b}'_\alpha \to \infty$. Recall the definition of $\tau'(b)$ in \eqref{tau_prime}. From  Lemma~\ref{mct:cross_fa}, for any $P_{1,t} \in {\cal M}_1$, $P_0\left\{S_{\tau'(\td{b}'_\alpha)} \geq \td{b}'_\alpha\right\} \leq \alpha (1+o(1))$. Then, using \cite[Sec.~2.6]{siegmund_1985}, it can be shown that
\begin{equation}
    \E{\infty}{P_0,\bm{P_1}}{\tau(\td{b}'_\alpha)} = \frac{\E{}{P_0}{\tau'(\td{b}'_\alpha)}}{P_0\left\{S_{\tau'(\td{b}'_\alpha)} \geq \td{b}'_\alpha \right\}} 
    \stackrel{(*)}{\geq} \frac{1}{P_0\left\{S_{\tau'(\td{b}'_\alpha)} \geq \td{b}'_\alpha \right\}} \geq \alpha^{-1}(1+o(1))
\end{equation}
where $(*)$ follows because $\E{}{P_0}{\tau'(\td{b}'_\alpha)} \geq 1$. Thus, \eqref{eq:FAR_MCT} is satisfied asymptotically.

For the second result, it is sufficient to show that $(\td{b}'_\alpha - \td{b}_\alpha)/\td{b}_\alpha = R_0^{-2}-1+o(1)$. Let
\begin{equation}
    D := \frac{2 \Delta}{\sigma_0^2} \td{b}'_\alpha - |\ln \alpha|.
\end{equation}
Then, recalling the definition of $\td{b}_\alpha$ in \eqref{balpha}, we have
\begin{equation}
\label{51}
    \frac{\td{b}'_\alpha - \td{b}_\alpha}{\td{b}_\alpha} = \frac{2 \Delta \td{b}'_\alpha}{|\ln{\alpha}| \sigma_0^2} - 1 = \frac{D}{|\ln{\alpha}|}
\end{equation}
and we need to show that
\begin{equation}
\label{D_rate}
    D = (R_0^{-2}-1)|\ln{\alpha}|+o(|\ln{\alpha}|).
\end{equation}
Rearranging the terms in \eqref{51}, we can express $\td{b}'_\alpha$ as:
\begin{equation}
    \td{b}'_\alpha = \td{b}_\alpha \left(1 + \frac{D}{|\ln{\alpha}|} \right) = \frac{\sigma_0^2}{2 \Delta} \left(|\ln{\alpha}| + D \right).
\end{equation}
Plugging this expression for $\td{b}'_\alpha$ into \eqref{nng:fa}, we have
\begin{equation}
    \sqrt{\frac{\sigma_0^4 \pi}{\Delta^4} (D+|\ln\alpha|)} e^{-R_0^2 (D+|\ln \alpha|)} = \alpha.
\end{equation}
Taking log on both sides, we obtain
\begin{equation}
\label{D_log_alpha}
    -\frac{1}{2}\ln \left(\frac{\sigma_0^4 \pi}{\Delta^4} (D+|\ln\alpha|)\right) + R_0^2 (D+|\ln \alpha|) = |\ln\alpha|.
\end{equation}

In the following, we first hypothesize that $D = D_1 |\ln \alpha| + o(|\ln \alpha|)$, where $D_1$ is not a function of $\alpha$, and then validate the hypothesis. Using this expression of $D$, the first term becomes
\begin{align*}
    \ln \left(\frac{\sigma_0^4 \pi}{\Delta^4} (D+|\ln\alpha|)\right) & = \ln \left(\frac{\sigma_0^4 \pi}{\Delta^4} ((D_1+1) |\ln \alpha| + o(|\ln \alpha|)\right) \\
    & = o(|\ln\alpha|).
\end{align*}
Therefore, \eqref{D_log_alpha} can be restated as:
\begin{equation}
    D = (R_0^{-2}-1) |\ln \alpha| + o(|\ln \alpha|).
\end{equation}
This validates our hypothesis on $D$, and also establishes \eqref{D_rate}. The proof is now complete.
\end{proof}

\begin{remark}
The threshold $\td{b}'_\alpha$ that meets the FAR constraint \eqref{eq:FAR_MCT} asymptotically can be obtained by solving \eqref{thr_set} numerically. Alternatively, we can use the approximation in \eqref{thr_set2} along with \eqref{balpha} to set:
\begin{equation} \label{eq:thrsh_set_3}
\td{b}'_\alpha = \frac{\td{b}_\alpha}{ R_0^2} = \frac{\sigma_0^2 |\ln \alpha|}{ 2 R_0^2 \Delta}.
\end{equation}
\end{remark}

\subsubsection{Worst-case Delay Analysis}
We now turn to the delay analysis of MCT. The following two lemmas are useful in establishing the delay performance. Specifically, Lemma~\ref{sprt-finite} is used to guarantee that MCT statistic is finite in expectation, Lemma~\ref{walds_non_stat} is used to extend Wald's identity to the non-stationary setting, and finally Theorem~\ref{mct:delay} is used to upper bound the asymptotic delay of MCT in the case where $P_{1,t}$'s are non-stationary.

\begin{lemma}
\label{sprt-finite}
Suppose that $P_{1,t} \in {\cal M}_1$ for all $t \geq 1$. Then, for any $b > 0$, $\E{1}{P_0,\bm{P_1}}{\tau(b)} < \infty$.
\end{lemma}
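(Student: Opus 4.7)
The plan is to exploit the strictly positive drift of the increments $Z_t = X_t - (\mu_0+\eta)/2$ under $\mathbb{P}_1^{P_0, \bm{P_1}}$. Since $X_t \sim P_{1,t}$ with $\E{}{P_{1,t}}{X_t} \geq \eta$, each increment satisfies $\E{}{P_{1,t}}{Z_t} \geq (\eta-\mu_0)/2 = \Delta > 0$. Because we are inside the subsection assuming bounded support on $[0,1]$, we also have $|Z_t| \leq 1$ uniformly in $t$. Given these two facts, the MCT statistic must grow past $b$ quickly.

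The first step is to dominate the MCT stopping time by the first passage time of the ordinary partial-sum process. Using the definition $\td{\Lambda}^{\mu_0,\eta}(t) = \max_{1 \leq k \leq t+1} \sum_{i=k}^t Z_i$ and selecting $k=1$, we immediately obtain $\td{\Lambda}^{\mu_0,\eta}(t) \geq S_t$, where $S_t = \sum_{i=1}^t Z_i$. Consequently $\tau(b) \leq \sigma(b) := \inf\{t : S_t \geq b\}$, and it suffices to prove $\E{1}{P_0,\bm{P_1}}{\sigma(b)} < \infty$.

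The second step is a Hoeffding-type concentration bound for independent but not identically distributed bounded increments. Since $|Z_t - \E{}{P_{1,t}}{Z_t}| \leq 2$ and $\E{}{P_{1,t}}{S_n} \geq n\Delta$, for any $n \geq n_0 := \lceil 2b/\Delta \rceil$ we have
\begin{equation*}
    \Prob{1}{P_0,\bm{P_1}}{S_n < b} \leq \Prob{1}{P_0,\bm{P_1}}{S_n - \E{}{}{S_n} < b - n\Delta} \leq \exp\!\left(-\frac{(n\Delta-b)^2}{2n}\right).
\end{equation*}
The right-hand side decays faster than any polynomial in $n$.

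The third step is the telescoping identity $\E{}{}{\sigma(b)} = \sum_{n \geq 0} \mathbb{P}\{\sigma(b) > n\}$ together with the inclusion $\{\sigma(b) > n\} \subseteq \{S_n < b\}$, which yields
\begin{equation*}
    \E{1}{P_0,\bm{P_1}}{\sigma(b)} \leq n_0 + \sum_{n \geq n_0} \exp\!\left(-\frac{(n\Delta-b)^2}{2n}\right) < \infty.
\end{equation*}
Combined with $\tau(b) \leq \sigma(b)$, this delivers the lemma. The main obstacle, if any, is merely bookkeeping: ensuring the Hoeffding bound is stated for independent but non-identically-distributed summands (standard) and keeping the constants honest given that $Z_t$ has support of length $1$. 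No probabilistic subtlety is required beyond the positive-drift observation and bounded-support assumption already in force.
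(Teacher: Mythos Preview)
Your proposal is correct and follows essentially the same approach as the paper: both dominate $\tau(b)$ by the first-passage time $\sigma(b)=\inf\{t:S_t\geq b\}$ of the raw partial sums, apply Hoeffding's inequality for independent bounded (non-identically distributed) increments to bound $\Prob{1}{P_0,\bm{P_1}}{S_n<b}$ once $n\Delta>b$, and then sum the resulting sub-Gaussian tails via $\E{}{}{\sigma(b)}=\sum_{n\geq 0}\Prob{}{}{\sigma(b)>n}$. The only cosmetic difference is that the paper bounds the tail sum by an integral and evaluates it through the Bessel-function identity of Lemma~\ref{int_bessel}, whereas you simply note the series converges; also, your bound $|Z_t-\E{}{P_{1,t}}{Z_t}|\leq 2$ is looser than necessary (the range of $Z_t$ is $1$), but this only affects constants and not the finiteness conclusion.
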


\begin{lemma}
\label{walds_non_stat}
Let $Z_1,Z_2,\dots$ be independent random variables. For any $t \geq 1$, $Z_t \sim P_t$ and $\E{}{P_t}{Z_t} \geq \Delta$. Let $T$ be any stopping time w.r.t. $Z_1,Z_2,\dots$ such that $\E{}{\bm{P}}{T} < \infty$. Then,
\begin{equation}
    \E{}{\bm{P}}{\sum_{t=1}^T Z_t} \geq \E{}{\bm{P}}{T} \Delta.
\end{equation}
\end{lemma}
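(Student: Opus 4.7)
The approach is a direct adaptation of the classical derivation of Wald's identity, noting that the key ingredient for the identity is independence, not identical distribution. First I would rewrite the stopped sum as
\begin{equation*}
\sum_{t=1}^{T} Z_t \;=\; \sum_{t=1}^{\infty} Z_t\, \ind{T \geq t},
\end{equation*}
and exploit the fact that $\{T \geq t\} = \{T \leq t-1\}^c$ is $\sigma(Z_1,\dots,Z_{t-1})$-measurable, hence independent of $Z_t$ by the independence assumption on the $Z_i$'s.

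Assuming the interchange of expectation and infinite summation is justified, independence then gives
\begin{equation*}
\E{}{\bm{P}}{Z_t\, \ind{T \geq t}} \;=\; \E{}{P_t}{Z_t}\, \Prob{}{\bm{P}}{T \geq t} \;\geq\; \Delta\, \Prob{}{\bm{P}}{T \geq t},
\end{equation*}
and summing over $t$ together with the elementary identity $\sum_{t \geq 1} \Prob{}{\bm{P}}{T \geq t} = \E{}{\bm{P}}{T}$ yields the claimed inequality $\E{}{\bm{P}}{\sum_{t=1}^T Z_t} \geq \Delta\, \E{}{\bm{P}}{T}$.

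The only nontrivial step is justifying the Fubini/Tonelli interchange for signed summands. In the setting of this paper, $Z_t = X_t - (\mu_0+\eta)/2$ arises from bounded observations $X_t \in [0,1]$, so $|Z_t|\leq 1$ uniformly and $\sum_{t \geq 1} \E{}{\bm{P}}{|Z_t|\, \ind{T \geq t}} \leq \E{}{\bm{P}}{T} < \infty$, which suffices. More generally, I would work with the truncated stopping time $T \wedge n$, apply the independence identity to this finite sum, and pass to the limit via monotone convergence on $(Z_t-\Delta)_+ \ind{T \geq t}$ together with dominated convergence (or Fatou) on the negative part, again using $\E{}{\bm{P}}{T}<\infty$. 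The main obstacle is therefore a purely technical integrability issue; once it is handled, the independence-plus-stopping-time calculation is immediate and completes the argument in essentially three lines.
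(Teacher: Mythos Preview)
Your proposal is correct and follows essentially the same route as the paper: write $\sum_{t=1}^T Z_t = \sum_{t\geq 1} Z_t\,\ind{T\geq t}$, use that $\{T\geq t\}$ is $\sigma(Z_1,\dots,Z_{t-1})$-measurable and hence independent of $Z_t$, and finish with $\sum_{t\geq 1}\Prob{}{\bm{P}}{T\geq t}=\E{}{\bm{P}}{T}$. The only cosmetic difference is in how the interchange of sum and expectation is justified: the paper splits $Z_t=Z_t^+-Z_t^-$ and applies monotone convergence to each nonnegative piece, whereas you invoke Fubini via the bound $|Z_t|\leq 1$ (valid in the paper's bounded-support setting) or a truncation $T\wedge n$; both are standard and equivalent here.
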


The proofs of the lemmas are given in the appendix. Using these lemmas, we can upper bound the asymptotic delay as follows.
\begin{theorem}
\label{mct:delay}
Under the same assumptions as in Lemma \ref{mct:cross_fa}, the worst-case delay satisfies
\begin{equation}
\label{est_stat}
    \sup_{\bm{P_1}: P_{1,t} \in \mathcal{M}_1,\forall t} \WADD{P_0,\bm{P_1}}{\tau(\td{b}'_\alpha)} = \frac{|\ln{\alpha}| \sigma_0^2}{2\Delta^2 R_0^2}(1+o(1)) 
\end{equation}
as $\alpha \to 0$, where $\td{b}'_\alpha$ is defined in \eqref{thr_set}.
\end{theorem}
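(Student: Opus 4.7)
The plan is to upper-bound $\WADD{P_0,\bm{P_1}}{\tau(\td{b}'_\alpha)}$ by reducing it to an expected random-walk first-passage time, apply the non-stationary Wald identity (Lemma~\ref{walds_non_stat}), and exploit bounded support to make the overshoot negligible; the matching lower bound follows by choosing a post-change sequence that saturates the mean constraint.

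First, I would reduce the WADD to a first-passage time. Because $\td{\Lambda}^{\mu_0,\eta}(t) = (\td{\Lambda}^{\mu_0,\eta}(t-1)+Z_t)^+ \geq \td{\Lambda}^{\mu_0,\eta}(t-1) + Z_t$, iterating from time $\nu-1$ onward yields $\td{\Lambda}^{\mu_0,\eta}(t) \geq \td{\Lambda}^{\mu_0,\eta}(\nu-1) + \sum_{i=\nu}^t Z_i$ for $t \geq \nu$. Since $\td{\Lambda}^{\mu_0,\eta}(\nu-1) \geq 0$ almost surely, for any past $X_1,\ldots,X_{\nu-1}$ the stopping time $\tau(\td{b}'_\alpha)$ is no later than the first time the random walk $\sum_{i=\nu}^t Z_i$ crosses $\td{b}'_\alpha$. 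Taking $\sup_\nu$ and $\esssup$ over the history gives $\WADD{P_0,\bm{P_1}}{\tau(\td{b}'_\alpha)} \leq \E{1}{P_0,\bm{P_1}}{\tau(\td{b}'_\alpha)}$ uniformly in $\bm{P_1}$.

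Next, Lemma~\ref{sprt-finite} ensures $\E{1}{P_0,\bm{P_1}}{\tau(\td{b}'_\alpha)} < \infty$, so Lemma~\ref{walds_non_stat} applied to $S_t = \sum_{i=1}^t Z_i$ with drift lower bound $\Delta$ gives $\E{1}{P_0,\bm{P_1}}{S_{\tau(\td{b}'_\alpha)}} \geq \Delta \cdot \E{1}{P_0,\bm{P_1}}{\tau(\td{b}'_\alpha)}$. Bounded support $X_t \in [0,1]$ makes $Z_t \leq 1$, so $S_{\tau(\td{b}'_\alpha)} \leq \td{b}'_\alpha + 1$, hence $\E{1}{P_0,\bm{P_1}}{\tau(\td{b}'_\alpha)} \leq (\td{b}'_\alpha + 1)/\Delta$. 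Substituting $\td{b}'_\alpha = |\ln\alpha|\sigma_0^2/(2 R_0^2 \Delta) \cdot (1+o(1))$ from Theorem~\ref{mct:fa} (see \eqref{thr_set2}) yields the upper bound $|\ln\alpha|\sigma_0^2/(2\Delta^2 R_0^2) \cdot (1+o(1))$. For the matching lower bound, I would pick $\bm{P_1}$ with $\E{}{P_{1,t}}{X_t} = \eta$ exactly (e.g., a two-point distribution on $\{0,1\}$ with the correct mean lies in $\mc{M}_1$), so that Wald's identity becomes an equality and the same overshoot control gives $\E{1}{P_0,\bm{P_1}}{\tau(\td{b}'_\alpha)} = \td{b}'_\alpha/\Delta \cdot (1+o(1))$; combined with $\WADD{P_0,\bm{P_1}}{\tau(\td{b}'_\alpha)} \geq \E{1}{P_0,\bm{P_1}}{\tau(\td{b}'_\alpha)}$ (take $\nu = 1$), the sup over $\bm{P_1}$ attains the asserted asymptotic.

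The main obstacle is controlling the overshoot in the non-stationary setting so that it is genuinely $o(\td{b}'_\alpha)$; this depends crucially on the bounded-support assumption, since without it a heavy right tail of $Z_t$ could inflate $\E{1}{P_0,\bm{P_1}}{S_{\tau(\td{b}'_\alpha)}}$ beyond $\td{b}'_\alpha + O(1)$ and destroy the leading-order equality. A secondary subtlety is justifying the reduction step in the non-stationary regime: one must note that conditioning on $X_1,\ldots,X_{\nu-1}$ leaves the distribution of $\{Z_i\}_{i \geq \nu}$ under $\bm{P_1}$ unchanged (by independence) and that the nonnegative starting value $\td{\Lambda}^{\mu_0,\eta}(\nu-1)$ can only accelerate crossing $\td{b}'_\alpha$, so the worst case is $\nu = 1$ with the CuSum initialized at $0$.
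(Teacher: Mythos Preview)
Your proposal is correct and follows essentially the same route as the paper: the upper bound is obtained exactly as you describe---reduce $\WADD{}{}{}$ to $\E{1}{P_0,\bm{P_1}}{\tau(\td b'_\alpha)}$, invoke Lemma~\ref{sprt-finite} for finiteness, apply Lemma~\ref{walds_non_stat}, and use $S_{\tau-1}\leq \td\Lambda^{\mu_0,\eta}(\tau-1)<\td b'_\alpha$ together with $Z_\tau\leq 1$ to cap $\E{}{\bm{P_1}}{S_\tau}$ by $\td b'_\alpha+1$; for the lower bound both you and the paper specialize to a stationary $P_1^*$ with mean exactly $\eta$ and appeal to the classical CuSum delay asymptotic $\E{1}{}{\tau(\td b'_\alpha)}\sim\td b'_\alpha/\Delta$.

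One small caveat on your lower-bound sketch: ``Wald becomes an equality and the same overshoot control'' is not quite enough. The bound $S_\tau\leq\td b'_\alpha+1$ only yields the upper direction; for the lower direction you need $\E{}{P_1^*}{S_\tau}\geq\td b'_\alpha-O(1)$, which requires controlling $\min_{0\leq k\leq\tau}S_k$ (finite in expectation under positive drift) since $S_\tau=\td\Lambda^{\mu_0,\eta}(\tau)+\min_{0\leq k\leq\tau}S_k$. The paper sidesteps this by citing a standard renewal-theoretic result for CuSum delay, so your level of detail matches theirs.
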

\begin{proof}
Following Lemma \ref{sprt-finite}, the MCT stopping time is finite in expectation even when the post-change distributions are non-stationary (but lie in ${\cal M}_1$). Thus, for any $P_{1,t} \in {\cal M}_1$,
\begin{align}
    \WADD{P_0,\bm{P_1}}{\tau(\td{b}'_\alpha)} &\leq \E{1}{P_0,\bm{P_1}}{\tau(\td{b}'_\alpha)} \nonumber\\
    &\stackrel{(i)}{\leq} \frac{1}{\Delta} \E{}{\bm{P_1}}{\sum_{t=1}^{\tau(\td{b}'_\alpha)} Z_t}\nonumber\\
    &= \frac{1}{\Delta} \E{}{\bm{P_1}}{\sum_{t=1}^{\tau(\td{b}'_\alpha)-1} Z_t + Z_{\tau(\td{b}'_\alpha)}} \nonumber\\
    &\stackrel{(ii)}{\leq} \frac{1}{\Delta} \left(\Tilde{b}'_\alpha + 1\right)\nonumber\\
    &= \frac{|\ln{\alpha}| \sigma_0^2}{2\Delta^2 R_0^2}(1+o(1))
\end{align}
where $(i)$ follows by Lemma \ref{walds_non_stat}, and $(ii)$ follows because $Z_{\tau(\td{b}'_\alpha)} \leq 1$. Thus,
\begin{equation}
    \sup_{\bm{P_1}: P_{1,t} \in \mathcal{M}_1, \forall t \geq 1} \WADD{P_0,\bm{P_1}}{\tau_{\td{b}'_\alpha}} \leq \frac{1}{\Delta} \left(\Tilde{b}'_\alpha + 1\right) = \frac{|\ln{\alpha}| \sigma_0^2}{2\Delta^2 R_0^2}(1+o(1))
\end{equation}
where $o(1) \to 0$ as $\alpha \to 0$.

For the other direction, consider stationary $P_{1,t} = P_1^* \in {\cal M}_1$ with the post-change mean $\E{}{P_1^*}{X_i} = \eta$, which implies $\E{}{P_1^*}{Z_i} = \Delta$. Then, as $\alpha \to 0$,
\begin{align}
    \WADD{P_0,P_1^*}{\tau(\td{b}'_\alpha)} &= \frac{\td{b}_\alpha'}{\Delta} (1+o(1)) \nonumber\\
    &= \frac{|\ln{\alpha}| \sigma_0^2}{2\Delta^2 R_0^2}(1+o(1))
\end{align}
where the first line follows by a standard renewal theory argument \cite[Sec. 2.5]{tartakovsky_sequential}. \qedhere
\end{proof}



\begin{remark}
As $\Delta \to 0$, $R_0 \to 1$. Thus, the result in Theorem~\ref{mct:delay} becomes
\begin{equation}
    \sup_{\bm{P_1}: P_{1,t} \in \mathcal{M}_1, \forall t \geq 1} \WADD{P_0,\bm{P_1}}{\tau(\td{\Lambda}^{\mu_0,\eta}, \td{b}'_\alpha)} = \frac{|\ln{\alpha}| \sigma_0^2}{2\Delta^2}(1+o(1))
\end{equation}
where $o(1)$ goes zero as $\alpha$ and $\Delta$ go to zero, which coincides with the minimax robust worst-case delay in (\ref{est_opt_delay}).
\end{remark}





\section{Numerical Results and Discussion}
\label{sec:num-res}
 We study the performance of the proposed tests through simulations for the case where the pre- and post-change distributions are Beta(4,16) ($\mu_0 = 0.2$) and Beta(4.5,16) ($\mu_1 = 0.2195$), respectively. The mean-threshold $\eta$ is set to be $0.21$. In particular, we compare the performances for the following three test statistics:
\begin{enumerate}
    \item The CuSum statistic for the case where both the pre- and post-change distributions are known, defined in (\ref{cusum_stat}).
    \item The statistic when only the pre-change distribution is known,  defined in (\ref{opt_stat}).
    \item The MCT statistic defined in (\ref{stat}).
\end{enumerate}
For all three statistics, based on their recursive structure, it is easy to show that the worst-case value of the change-point for computing WADD in \eqref{orig_qcd} is $\nu=1$. Therefore we can estimate the worst-case delays of the tests by simulating the post-change distribution from time 1.

\begin{figure}[tbp]
\centerline{\includegraphics[width=.75\textwidth,height=9cm]{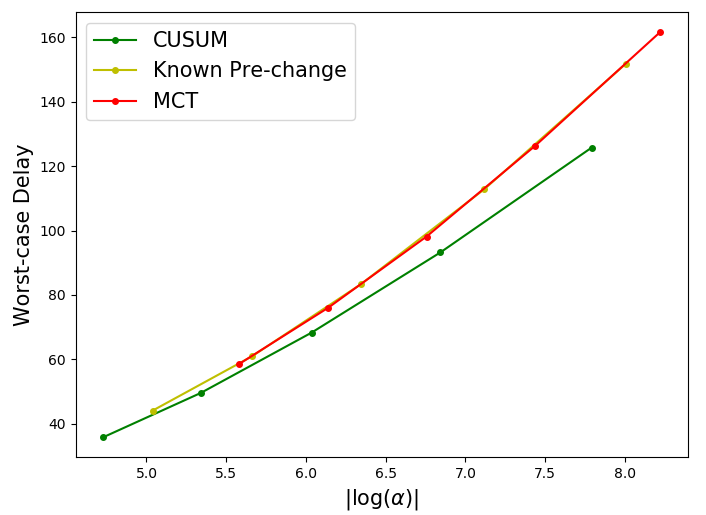}}
\vspace{-3mm}\caption{Performances of tests with different statistics. The pre- and post-change distribution are Beta(4,16) ($\mu_0 = 0.2$) and Beta(4.5,16) ($\mu_1 = 0.2195$), respectively. The mean-threshold $\eta = 0.21$.}
\label{fig:cusum_mct}
\end{figure}

We see in Fig. \ref{fig:cusum_mct} that the performance of MCT is very close to that of the asymptotically minimax robust optimal test that uses the full knowledge of the pre-change distribution. Note that the MCT statistic uses only the pre-change mean; the variance is required for setting the threshold to meet a given FAR constraint. 

\begin{figure}[tbp]
\centerline{\includegraphics[width=.75\textwidth,height=9cm]{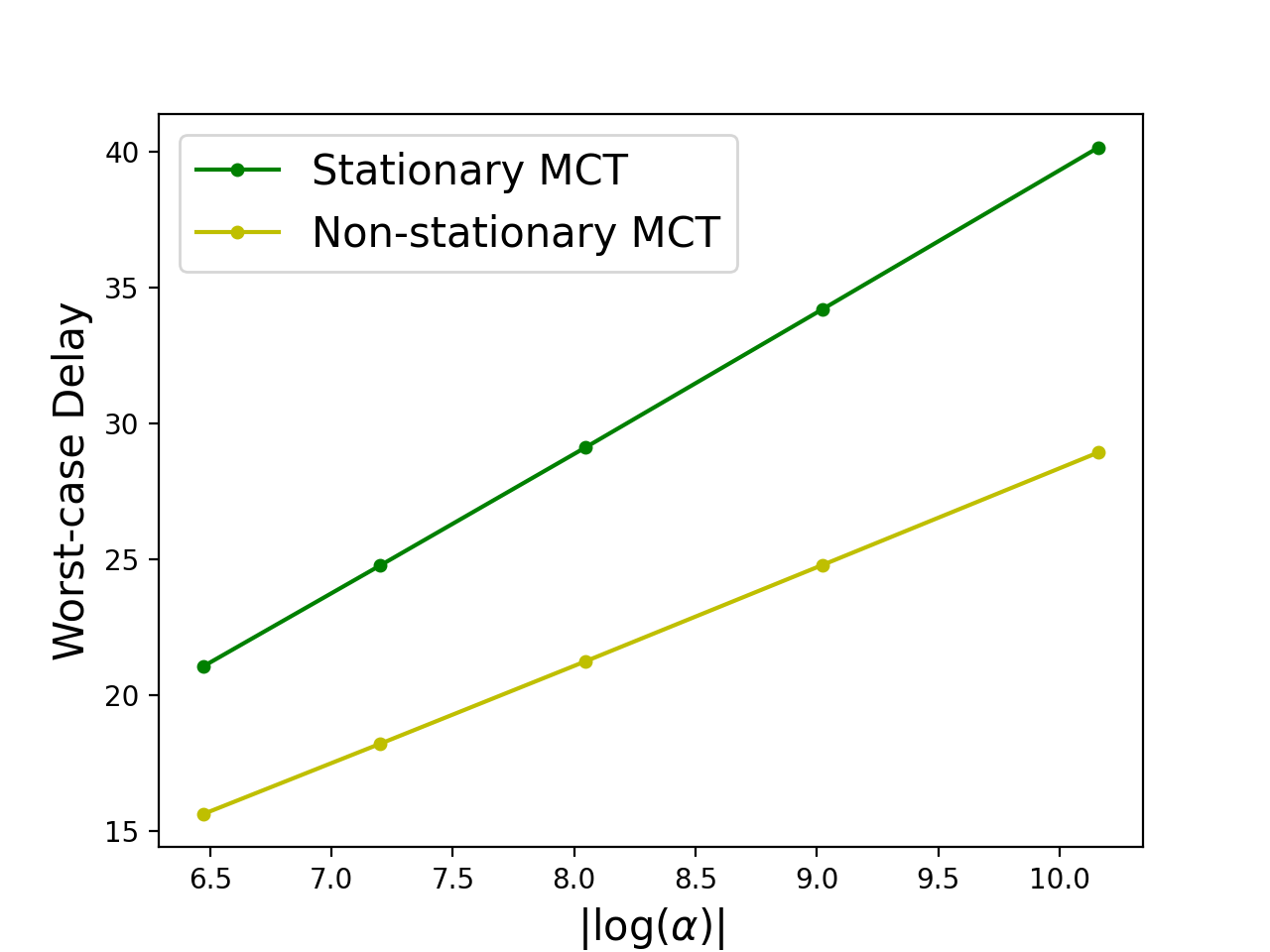}}
\vspace{-3mm}\caption{Performances of MCT under stationary and non-stationary post-change distributions. In the stationary case, the pre- and post-change distribution are Beta(2,2) ($\mu_0 = 0.5$) and Beta(3.5,2) ($\mu_1 = 0.636$), respectively, and the mean-threshold $\eta = \mu_1$. In the non-stationary case, the post-change observations are drawn from Beta($A$,2) at each time $t$, where $A \sim$ Unif(3.5,4.5).}
\label{fig:mct_non_stat}
\end{figure}

In Fig. \ref{fig:mct_non_stat}, we compare the performance of the MCT when the post-change distribution is non-stationary with that when the post-change distribution is stationary, for beta distributed observations. In the stationary case, we choose the post-change distribution to have mean $\mu_1 = \eta$, and in the non-stationary we choose the post-change distributions such that they all have mean greater than or equal to $\eta$. We observe, as expected, that the worst-case delay in the non-stationary case is always smaller than that in the stationary case. 

\begin{figure}[tbp]
\centerline{\includegraphics[width=.75\textwidth,height=9cm]{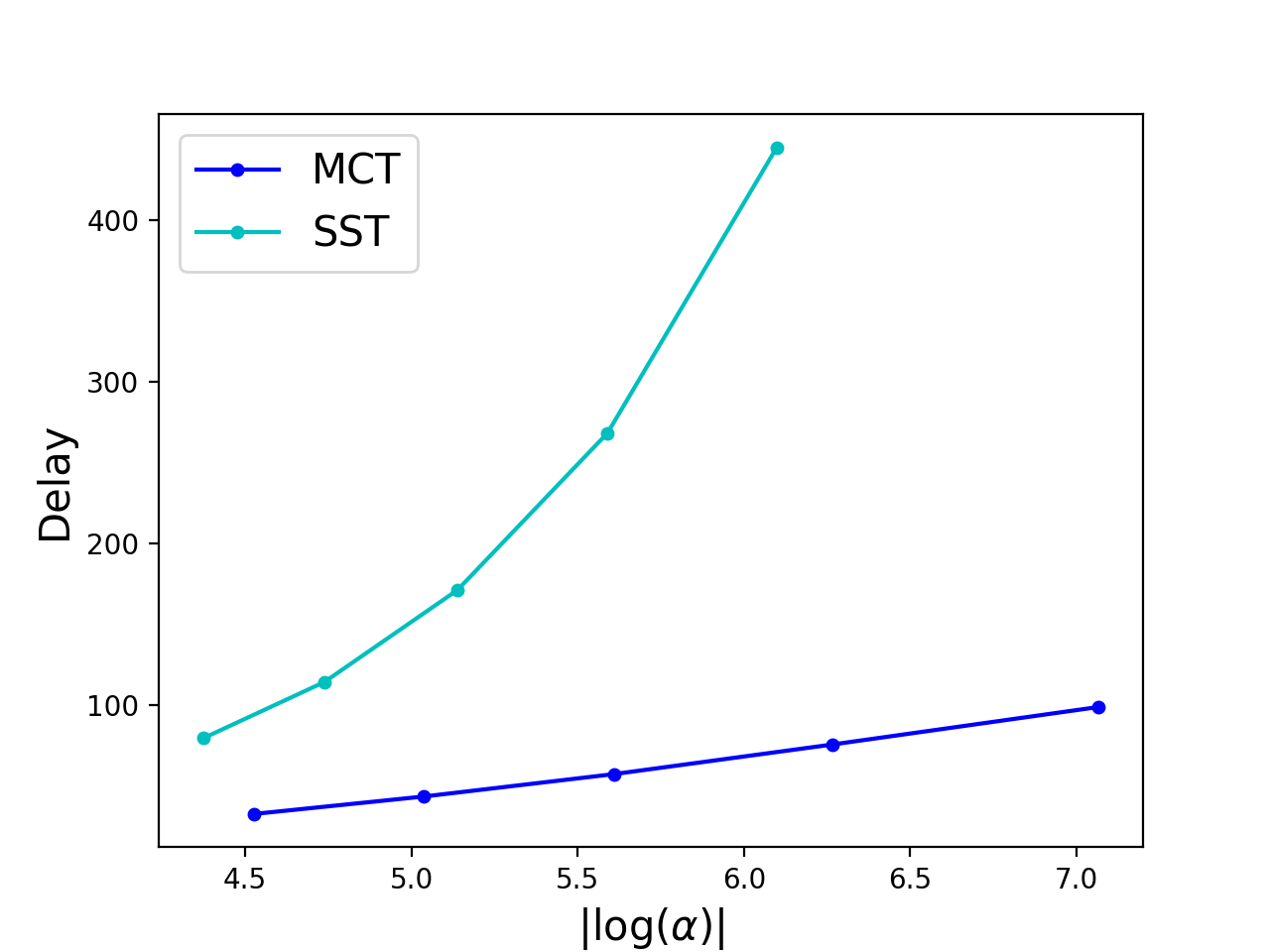}}
\vspace{-3mm}\caption{Performances of MCT and SST ($\tau_{\text{scan}}$ as defined in \eqref{scan-stat}). The pre- and post-change distribution are Beta(4,16) ($\mu_0 = 0.2$) and Beta(4.5,16) ($\mu_1 = 0.2195$), respectively.}
\label{fig:scan}
\end{figure}

Now, we compare our MCT test with a test using scan statistics (without windowing), defined as (see, e.g., \cite{maillard19a}):
\begin{equation}
\label{scan-stat}
    \tau_{\text{scan}}(b) := \inf \{t: \exists s \in [2,t]: \abs{\hat{\mu}_{1:s-1} - \hat{\mu}_{s:t}} \geq b \}
\end{equation}
where, assuming $s \leq t$,
\begin{equation}
    \hat{\mu}_{s:t} := \frac{1}{t-s+1} \sum_{i=s}^t X_i.
\end{equation}
The scan statistic test (SST) $\tau_{\text{scan}}$ is designed to detect a change in the mean of the observation sequence, but does not incorporate the knowledge that the post-change mean is greater than or equal to $\eta$. The SST also does not require knowledge of the pre-change mean, but it requires the change-point to be large enough so that a reasonable estimate of the pre-change mean can be obtained from $\hat{\mu}_{1:s-1}$.  

In the results shown in Fig.~\ref{fig:scan}, we assume that the change-point occurs after the first 100 observations are collected.
To allow for a fair comparison between MCT and SST, we use the first 100 observations to estimate $\mu_0$ for use in the MCT statistic, instead of assuming that $\mu_0$ is known. For the MCT simulation, the statistic is initialized after the estimation of $\mu_0$ from the first 100 samples, and therefore the delay is simulated by assuming that the change happens immediately after initialization, which corresponds to $\nu=1$, the worst-case value of the change-point. For the SST simulation, the change-point is set  $\nu=101$, which may not necessarily result in the worst-case delay. In Fig. \ref{fig:scan}, we see that the worst-case delay for MCT is much smaller than the delay of $\tau_{\text{scan}}$ at $\nu=101$, which is a lower bound of the worst-case delay of $\tau_{\text{scan}}$ over all possible change-points.

\begin{figure}[htbp]
\centerline{\includegraphics[width=\textwidth,height=10cm]{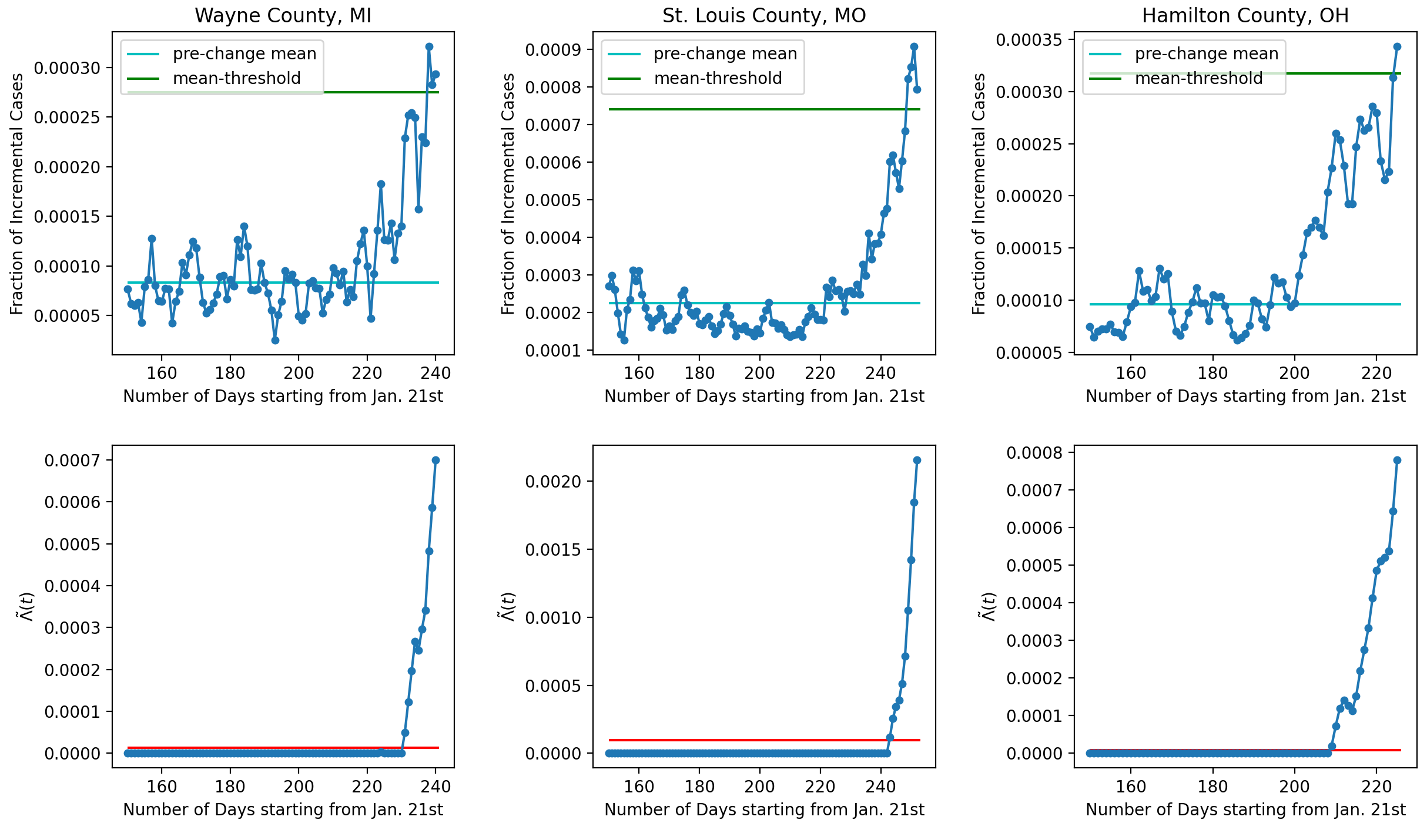}}
\vspace{-3mm}\caption{COVID-19 monitoring example. The upper subplot is the three-day moving average of the new cases  of COVID-19 as a fraction of the population in Wayne County, MI (left), St. Louis County, MO (middle), and Hamilton County, OH (right). The x-axis is the number days elapsed after January 21, 2020. The pre-change mean and variance are estimated using data from days 120 to 150. The FAR threshold $\alpha$ is set to $0.01$. For each county, the mean-threshold $\eta$ (in green) is set to be 3.3 times of the estimated pre-change mean (in cyan). The lower subplot shows the evolution of the statistic $\Tilde{\Lambda}$ in the corresponding county. The $\Lambda$-threshold $\td{b}_\alpha$ (in red) is calculated using equation (\ref{balpha}).}
\label{fig:covid}
\end{figure}

In Fig. \ref{fig:covid}, we apply the MCT to monitoring the spread of COVID-19 using new case data from various counties in the US \cite{nyt-covid-data}. The incremental cases from day to day can be assumed to be roughly independent.  The goal is to detect the onset of a new wave of the pandemic based on the incremental cases as a fraction of the county population exceeding some pre-specified level. 
The pre-change mean and variance are estimated using observations for periods in which the increments remain low and roughly constant. We set the mean-threshold $\eta$ to be a multiple of the pre-change mean, with understanding that such a threshold might be indicative of a new wave.  With this choice, we observe that 
the MCT statistic significantly and persistently crosses the test-threshold around late November in all counties, which is strong indication of a new wave of the pandemic. More importantly, unlike the raw observations which are highly varying, the MCT statistic shows a clear dichotomy between  the pre- and post-change settings, with the statistic staying near zero before the purported onset of the new wave, and taking off nearly vertically after the onset.


\section{Conclusion}
\label{sec:concl}

We studied the problem of quickest detection of a change in the mean of an observation sequence to a value above a pre-specified threshold in a non-parametric setting, allowing for the post-change distribution to be non-stationary.  For the case where the pre-change distribution is known, we derived a test that asymptotically minimizes the worst-case detection delay over all post-change distributions, as the false alarm rate goes to zero. In the process of deriving this asymptotically optimal test, we provided some new results for the general problem of asymptotic minimax robust quickest change detection in non-stationary settings, which should be of independent interest.  We then studied the limiting form of the optimal test as the gap between the pre- and post-change means goes to zero, the MCT. The MCT statistic only requires knowledge of the pre-change mean. Under the additional assumption that the distributions of the observations have bounded support, we derived an asymptotic upper bound on the FAR of the MCT for moderate values of mean gap, which can be used to set the threshold of the MCT using only knowledge of the pre-change mean and variance. We also characterized the asymptotic worst-case delay of the MCT for moderate values of the mean gap. 

We validated our analysis through numerical results for detecting a change in the mean of a beta distribution. In particular, we found that the MCT suffers little performance loss relative to the asymptotically optimal test with known pre-change distribution. We also showed that the MCT can significantly outperform tests based on prior work on scan statistics, which do not use information about the post-change mean threshold $\eta$. We also demonstrated the use of the MCT for detecting the onset of a new wave of an existing pandemic.

A possible avenue for future research on this topic is the detection of a change in statistics other than the mean. It is also of interest to study the mean change detection problem in sensor network settings.

\bibliographystyle{IEEEtran}
\bibliography{ref}

\appendix
The following lemma is useful for the proof of Lemma~\ref{molloy_delay_ext}.
\begin{lemma}
\label{wlln_non_stat}
Let $Y_1,\dots,Y_n$ be independent, zero-mean random variables. Suppose 
\begin{equation*}
    \sup_{1 \leq t \leq n} \E{}{}{Y_t^2} = o(n)    
\end{equation*}
as $n \to \infty$. Then, as $n \to \infty$,
\begin{equation*}
    n^{-1} \sum_{i=1}^n Y_i \xrightarrow{p.} 0.
\end{equation*}
\end{lemma}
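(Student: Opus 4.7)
The plan is to apply the standard second-moment approach to the weak law of large numbers, suitably adapted to the non-identically distributed setting. Let $S_n := \sum_{i=1}^n Y_i$. Since the $Y_i$ are zero-mean, we immediately have $\E{}{}{S_n/n} = 0$, so it suffices to bound the variance of $S_n/n$ and then invoke Chebyshev's inequality.

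The key computation uses independence: I would write
\begin{equation*}
\Var{}{}{S_n/n} = \frac{1}{n^2} \sum_{i=1}^n \E{}{}{Y_i^2} \leq \frac{1}{n^2} \cdot n \cdot \sup_{1 \leq t \leq n} \E{}{}{Y_t^2} = \frac{1}{n} \sup_{1 \leq t \leq n} \E{}{}{Y_t^2}.
\end{equation*}
By the hypothesis $\sup_{1 \leq t \leq n} \E{}{}{Y_t^2} = o(n)$, the right-hand side is $o(1)$ as $n \to \infty$.

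Then, for any fixed $\eps > 0$, Chebyshev's inequality gives
\begin{equation*}
\Prob{}{}{\abs{S_n/n} > \eps} \leq \frac{\Var{}{}{S_n/n}}{\eps^2} \leq \frac{1}{\eps^2 n} \sup_{1 \leq t \leq n} \E{}{}{Y_t^2} \longrightarrow 0,
\end{equation*}
which establishes convergence in probability.

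There is no genuine obstacle here; the lemma is a mild generalization of the classical Chebyshev-type weak law to non-identically distributed zero-mean independent variables. The only point requiring a bit of care is ensuring that the uniform bound on the second moments is used correctly to dominate the sum of variances, which is handled in one line by pulling out the supremum. The hypothesis is slightly weaker than assuming uniformly bounded second moments: it allows $\E{}{}{Y_t^2}$ to grow sublinearly in $t$, but the argument above accommodates this without modification.
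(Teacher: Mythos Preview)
Your proof is correct and follows essentially the same approach as the paper: bound the variance of the normalized sum via independence and the supremum of the individual second moments, then apply Chebyshev's inequality. The paper's argument is line-for-line the same, with only cosmetic differences in notation.
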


\begin{proof}[\textbf{Proof of Lemma \ref{wlln_non_stat}}]
Denote $U_n = \sum_{i=1}^n Y_i$. By Chebyshev's inequality, for any $\eps > 0$,
\begin{align*}
    \Prob{}{}{\abs{n^{-1} \sum_{i=1}^n Y_i} > \eps} &= \Prob{}{}{U_n^2 > (n \eps)^2}\\
    &\leq \frac{\E{}{}{U_n^2}}{n^2 \eps^2}\\
    &\stackrel{(*)}{\leq} \frac{n \max_{1 \leq i \leq n} \E{}{}{Y_i^2}}{n^2 \eps^2} \xrightarrow{n \to \infty} 0
\end{align*}
where $(*)$ is due to the fact that $Y_i$'s are independent with zero-mean. 
\end{proof}

\begin{proof}[\textbf{Proof of Lemma \ref{molloy_delay_ext}}]
Fix $0 < \delta < 1$. Denote $\tau_b(\td{P}_0,\td{P}_1)$ as a short-hand notation for $\tau(\Lambda^{\td{P}_0,\td{P}_1}, b)$.
%
For any $t \geq \nu$, let
\begin{equation}
I_{t}^{\td{P}_0,\td{P}_1}
     := \E{}{P_{1,t}}{L^{\td{P}_0,\td{P}_1}(X)} = \KL{P_{1,t}}{\td{P}_0} - \KL{P_{1,t}}{\td{P}_1}.
\end{equation}
By definition of WS boundedness,
\begin{equation}
I_{t}^{\td{P}_0,\td{P}_1} \geq \KL{\td{P}_1}{\td{P}_0}.
\end{equation}
Let $Z_{t_1}^{t_2} (\bm{P_0},\bm{P_1}) := \sum_{t=t_1}^{t_2} \ln L^{P_{0,t},P_{1,t}} (X_t)$. Let $n_c := \floor{b/(1-\delta)\KL{\td{P}_1}{\td{P}_0}}$. 

From the proof of Theorem 4 in \cite{lai1998} (and also Theorem 1 in \cite{Molloy2017}), if we can establish
\begin{equation}
\label{wlln2}
    \lim_{n \to \infty} \Prob{\nu}{\bm{P_0},\bm{P_1}}{n^{-1} Z_t^{t+n-1} (\bm{\td{P}_0},\bm{\td{P}_1}) \leq \KL{\td{P}_1}{\td{P}_0} (1-\delta)} = 0
\end{equation}
for $t > \nu$, then, with a large enough $b$, we can get a large enough $n_c$ to satisfy
\begin{equation}
    \Prob{\nu}{\bm{P_0},\bm{P_1}}{n_c^{-1} Z_t^{t+n_c-1} (\bm{\td{P}_0},\bm{\td{P}_1}) \leq \KL{\td{P}_1}{\td{P}_0} (1-\delta)} < \delta,
\end{equation}
or equivalently,
\begin{equation}
    \Prob{\nu}{\bm{P_0},\bm{P_1}}{Z_t^{t+n_c-1} (\bm{\td{P}_0},\bm{\td{P}_1}) < b} < \delta.
\end{equation}
By independence (despite post-change being non-stationary), we get
\begin{align}
    &\quad \esssup \Prob{\nu}{\bm{P_0},\bm{P_1}}{(\tau_b(\td{P}_0,\td{P}_1)-\nu+1)^+ > t n_c | {\cal F}_{\nu-1} } \nonumber\\
    &\leq \esssup \Prob{\nu}{\bm{P_0},\bm{P_1}}{Z_{\nu+(j-1)n_c}^{\nu+j n_c - 1} (\bm{\td{P}_0},\bm{\td{P}_1}) < b,\forall 1 \leq j \leq t | {\cal F}_{\nu-1} } \leq \delta^t
\end{align}
for any $\nu \geq 1$ and $t \geq 1$. Therefore,
\begin{align}
    &\quad \esssup \E{\nu}{\bm{P_0},\bm{P_1}}{(\tau_b(\td{P}_0,\td{P}_1)-\nu+1)^+|{\cal F}_{\nu-1}} \nonumber\\
    &\leq n_c \sum_{t=1}^\infty \Prob{\nu}{\bm{P_0},\bm{P_1}}{n_c^{-1} (\tau_b(\td{P}_0,\td{P}_1)-\nu+1)^+ > t} \nonumber\\
    &\leq n_c \sum_{t=0}^\infty \delta^t = \frac{n_c}{1-\delta},
\end{align}
and from the definition of $n_c$,
\begin{equation}
    \WADD{\bm{P_0},\bm{P_1}}{\tau_b(\td{P}_0,\td{P}_1)} \leq (1+o(1))\left(\frac{b}{\KL{\td{P}_1}{\td{P}_0}}\right) \frac{1}{(1-\delta)^2}.
\end{equation}
Because $\delta$ is arbitrary, we can take $\delta \to 0$ and the proof is complete.

It remains to show (\ref{wlln2}). For any $t > \nu$ and $\delta > 0$,
\begin{align}
    &\quad \Prob{\nu}{\bm{P_0},\bm{P_1}}{k^{-1} \sum_{i=t}^{t+k-1} \ln L^{\td{P}_0,\td{P}_1} (X_i) \leq (1-\delta) \KL{\td{P}_1}{\td{P}_0}} \nonumber\\
    &\stackrel{(*)}{\leq} \Prob{\nu}{\bm{P_0},\bm{P_1}}{k^{-1} \sum_{i=t}^{t+k-1} \ln L^{\td{P}_0,\td{P}_1} (X_i) \leq k^{-1} \sum_{i=t}^{t+k-1} I_{i}^{\td{P}_0,\td{P}_1} -\delta \KL{\td{P}_1}{\td{P}_0} } \nonumber\\
    &= \Prob{\nu}{\bm{P_0},\bm{P_1}}{k^{-1} \sum_{i=t}^{t+k-1} \underbrace{\left(\ln L^{\td{P}_0,\td{P}_1} (X_i) - I_{i}^{\td{P}_0,\td{P}_1} \right)}_{\text{zero mean, independent}} \leq -\delta \KL{\td{P}_1}{\td{P}_0}}.
\end{align}
Note that $(*)$ follows from the WS boundedness assumption, and $\delta \KL{\td{P}_1}{\td{P}_0}$ is some strictly positive constant. Next, we will use the previous lemma. Denote $Y_i = \ln L^{\td{P}_0,\td{P}_1} (X_i) - I_{i}^{\td{P}_0,\td{P}_1}$.
Thus, by Lemma \ref{wlln_non_stat},
\begin{equation}
    k^{-1} \sum_{i=t}^{t+k-1} \ln L^{\td{P}_0,\td{P}_1} (X_i) - I_{i}^{\td{P}_0,\td{P}_1} \xrightarrow{p.} 0,
\end{equation}
and thus (\ref{wlln2}) is proved. Now the proof is complete. \qedhere
\end{proof}


\begin{proof}[\textbf{Proof of Lemma \ref{molloy_fa_ext}}]
Recall that if no change ever happens, $X_t \sim P_{0,t} \in {\cal P}_0$ for all $t \geq 1$ and $\bm{P_0} = \{P_{0,t}\}_{t \geq 1}$. Here $P_{0,t}$ could be non-stationary. We follow the procedure in \cite[Thm.~4]{lai1998}. For simplicity, denote $Y_t \equiv \ln L^{\td{P}_0,\td{P}_1}(X_t)$. 

Define the stopping times:
\begin{equation}
    \sigma_{m+1} := \inf \left\{t > \sigma_m: \sum_{i=\sigma_m+1}^t Y_i < 0 \right\},
\end{equation}
and let $\sigma_0 := 0$ and $\inf \emptyset := \infty$. Suppose for now that we can establish that, on $\{\sigma_m < \infty\}$,
\begin{equation}
\label{superm2}
    \bm{P_0}\left\{\sum_{i=\sigma_m+1}^t Y_i \geq b \text{ for some } t > \sigma_m \bigg| {\cal F}_{\sigma_m} \right\} \leq e^{-b}
\end{equation}
for any threshold $b > 0$. Define the number of zero-crossings before hitting the threshold as
\begin{equation}
    M := \inf \left\{ m \geq 0: \sigma_m < \infty \text{ and } \sum_{i=\sigma_m+1}^t Y_i \geq b \text{ for some } t > \sigma_m \right\}.
\end{equation}
Thus, for any $m > 0$,
\begin{align}
    \bm{P_0}\{M > m\} &= \E{}{\bm{P_0}}{\ind{M > m}} \nonumber\\ 
    &= \E{}{\bm{P_0}}{\ind{M > m \text{ and } M > m-1}}\nonumber\\
    &= \E{}{\bm{P_0}}{\bm{P_0}\{M > m | {\cal F}_{\sigma_m} \}\ind{M > m-1}} \nonumber\\
    &= \E{}{\bm{P_0}}{\bm{P_0}\left\{\sum_{i=\sigma_m+1}^t Y_i < b \text{ for any } t > \sigma_m \bigg| {\cal F}_{\sigma_m} \right\}\ind{M > m-1}} \nonumber\\
    &\geq (1-e^{-b})\bm{P_0}\{M > m-1 \} \nonumber\\
    &\geq (1-e^{-b})^m
\end{align}
where the first inequality follows from \eqref{superm2} and the second one follows from recursion. Therefore,
\begin{equation}
    \E{}{\bm{P_0}}{\tau(\Lambda^{\td{P}_0,\td{P}_1},b)} \geq \E{}{\bm{P_0}}{M} \geq \sum_{m=0}^\infty (1-e^{-b})^m = e^b.
\end{equation}

It remains to show \eqref{superm2}. By WS boundedness condition, $\E{}{P_{0,t}}{\exp(Y_t)} \leq 1$ for any $t \geq 1$. Therefore, $\{\exp\left(\sum_{i=k}^n Y_i\right),{\cal F}_n,n \geq k\}$ is a non-negative supermartingale under $\bm{P_0}$, and
\begin{align}
    \bm{P_0}\left\{\sum_{i=\sigma_m+1}^{t} Y_i \geq b \text{ for some } t > \sigma_m \bigg| {\cal F}_{\sigma_m} \right\} &\leq \bm{P_0}\left\{\max_{\sigma_m+1 \leq n \leq t} \sum_{i=\sigma_m+1}^n Y_i \geq b \bigg| {\cal F}_{\sigma_m} \right\} \nonumber\\
    &\stackrel{(*)}{\leq} e^{-b}\  \E{}{P_{0,\sigma_m+1}}{\exp(Y_{\sigma_m+1})} \nonumber\\
    &\leq e^{-b}
\end{align}
where $(*)$ follows from Lemma 1 in \cite{Molloy2017}. The proof is now complete. \qedhere
\end{proof}

\begin{proof}[\textbf{Proof of Theorem \ref{asymp_opt_non_stat}}]
The proof steps are similar to \cite{Molloy2017}. From max-min inequality, it is true that
\begin{multline}
    \inf_{T \in {\cal C}_\alpha^{{\cal P}_0}}\quad \sup_{(\bm{P_0},\bm{P_1}): (P_{0,t},P_{1,t}) \in {\cal P}_0 \times {\cal P}_1 ,\forall t} \WADD{\bm{P_0},\bm{P_1}}{T} \\\geq \sup_{(\bm{P_0},\bm{P_1}): (P_{0,t},P_{1,t}) \in {\cal P}_0 \times {\cal P}_1 ,\forall t} \quad \inf_{T \in {\cal C}_\alpha^{{\cal P}_0}} \WADD{\bm{P_0},\bm{P_1}}{T}.
\end{multline}
It suffices to prove the other direction.

For any $(\bm{P_0},\bm{P_1})$ such that $(P_{0,t},P_{1,t}) \in {\cal P}_0 \times {\cal P}_1$ for any $t \geq 1$, we have
\begin{align}
\label{minmax_proof}
    \WADD{\bm{P_0},\bm{P_1}}{\tau(\Lambda^{\td{P}_0,\td{P}_1},b_\alpha)} &\stackrel{(i)}{\leq} (1+o(1))\left(\frac{b_\alpha}{\KL{\td{P}_1}{\td{P}_0}}\right) \nonumber\\
    &\stackrel{(ii)}{=} \WADD{\td{P}_0,\td{P}_1}{\tau(\Lambda^{\td{P}_0,\td{P}_1},b_\alpha)}\nonumber\\
    &\stackrel{(iii)}{=} \inf_{T \in {\cal C}^{\td{P}_0}_\alpha} \WADD{\td{P}_0,\td{P}_1}{T} \nonumber\\
    &\stackrel{(iv)}{=} \inf_{T \in {\cal C}^{{\cal P}_0}_\alpha} \WADD{\td{P}_0,\td{P}_1}{T} \nonumber\\
    &\stackrel{(v)}{\leq} \sup_{(\bm{P_0},\bm{P_1}): (P_{0,t},P_{1,t}) \in {\cal P}_0 \times {\cal P}_1 ,\forall t} \quad \inf_{T \in {\cal C}^{{\cal P}_0}_\alpha} \WADD{\bm{P_0},\bm{P_1}}{T}
\end{align}
where $o(1) \to 0$ as $\alpha \to 0$. In the above series of inequalities, $(i)$ follows directly from Lemma \ref{molloy_delay_ext}, $(ii)$ and $(iii)$ follow from standard CuSum analyses (e.g., \cite{lai1998}), $(iv)$ is justified below, and $(v)$ follows from the fact that $(\td{P}_0,\td{P}_1) \in {\cal P}_0 \times {\cal P}_1$. Note that $(iii)-(v)$ are satisfied for any $0 < \alpha < 1$.

We now justify $(iv)$. Since $\td{P}_0 \in {\cal P}_0$, ${\cal C}^{{\cal P}_0}_\alpha \subseteq {\cal C}^{\td{P}_0}_\alpha$. Following standard CuSum analysis (e.g., \cite{lai1998}), $\tau(\Lambda^{\td{P}_0,\td{P}_1},b_\alpha) \in {\cal C}^{\td{P}_0}_\alpha$. From Lemma \ref{molloy_fa_ext}, for any $P_{0,t} \in {\cal P}_0$, $\FAR{\bm{P_0}}{\tau(\Lambda^{\td{P}_0,\td{P}_1},b_\alpha)} \leq \alpha$, and therefore $\tau(\Lambda^{\td{P}_0,\td{P}_1},b_\alpha) \in {\cal C}^{{\cal P}_0}_\alpha$. For any $\alpha$, since $\tau(\Lambda^{\td{P}_0,\td{P}_1},b_\alpha)$ achieves the infimum over the set ${\cal C}^{\td{P}_0}_\alpha$, it also does over the subset ${\cal C}^{{\cal P}_0}_\alpha$.

Since \eqref{minmax_proof} holds for any $(\bm{P_0},\bm{P_1}):(P_{0,t},P_{1,t}) \in {\cal P}_0 \times {\cal P}_1,\forall t \geq 1$,
\begin{multline}
    \sup_{(\bm{P_0},\bm{P_1}): (P_{0,t},P_{1,t}) \in {\cal P}_0 \times {\cal P}_1 ,\forall t} \WADD{\bm{P_0},\bm{P_1}}{\tau(\Lambda^{\td{P}_0,\td{P}_1},b_\alpha)} \\\leq \sup_{(\bm{P_0},\bm{P_1}): (P_{0,t},P_{1,t}) \in {\cal P}_0 \times {\cal P}_1 ,\forall t} \quad \inf_{T \in {\cal C}^{{\cal P}_0}_\alpha} \WADD{\bm{P_0},\bm{P_1}}{T},
\end{multline}
and thus
\begin{multline}
    \inf_{T \in {\cal C}_\alpha^{{\cal P}_0}}\quad \sup_{(\bm{P_0},\bm{P_1}): (P_{0,t},P_{1,t}) \in {\cal P}_0 \times {\cal P}_1 ,\forall t} \WADD{\bm{P_0},\bm{P_1}}{T} \\\leq \sup_{(\bm{P_0},\bm{P_1}): (P_{0,t},P_{1,t}) \in {\cal P}_0 \times {\cal P}_1 ,\forall t} \quad \inf_{T \in {\cal C}^{{\cal P}_0}_\alpha} \WADD{\bm{P_0},\bm{P_1}}{T}.
\end{multline}
Therefore, $\tau(\Lambda^{\td{P}_0,\td{P}_1},b_\alpha)$ asymptotically solves \eqref{newmainprob} as $\alpha \to 0$, and
\begin{equation}
    \sup_{(\bm{P_0},\bm{P_1}): (P_{0,t},P_{1,t}) \in {\cal P}_0 \times {\cal P}_1 ,\forall t} \WADD{\bm{P_0},\bm{P_1}}{\tau(\Lambda^{\td{P}_0,\td{P}_1},b_\alpha)} = \left(\frac{|\ln \alpha|}{\KL{\td{P}_1}{\td{P}_0}}\right)(1+o(1))
\end{equation}
where $o(1) \to 0$ as $\alpha \to 0$.\qedhere
\end{proof}

The following Lemma is useful for the proof of Lemma \ref{mct:cross_fa}.
\begin{lemma}
\label{int_bessel}
Let $u,v$ be some constant. Then,
\begin{equation}
    \int_0^\infty \exp\left(-\left(u y + \frac{v}{y}\right)\right) d y = 2 \sqrt{\frac{v}{u}} K_1 (2 \sqrt{u v})
\end{equation}
where $K_\beta (z)$ is the modified Bessel function of the second kind of order $\beta$.
\end{lemma}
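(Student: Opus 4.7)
The plan is straightforward: reduce the integral to a standard integral representation of the modified Bessel function of the second kind via a one-line change of variables. Tacitly assuming $u,v>0$ (which is needed for the integral to converge in the form given), I would substitute $y = s\sqrt{v/u}$, so that
\begin{equation*}
uy + \frac{v}{y} = \sqrt{uv}\left(s + \frac{1}{s}\right), \qquad dy = \sqrt{v/u}\, ds.
\end{equation*}
The integral thus becomes $\sqrt{v/u}\int_0^\infty \exp\!\left(-\sqrt{uv}(s+1/s)\right) ds$, which is symmetric in the transformation $s \mapsto 1/s$ and is already in Bessel form.

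Next, I would invoke the classical integral representation
\begin{equation*}
K_\nu(z) \;=\; \frac{1}{2}\int_0^\infty t^{\nu-1} \exp\!\left(-\frac{z}{2}\left(t+\frac{1}{t}\right)\right) dt,
\end{equation*}
valid for all $\nu \in \mathbb{R}$ (it can be verified by confirming that the right-hand side satisfies the modified Bessel equation and has the correct asymptotics, or simply cited from a standard reference such as Gradshteyn--Ryzhik 3.471.9). Specializing at $\nu = 1$ gives $K_1(z) = \tfrac{1}{2}\int_0^\infty \exp(-\tfrac{z}{2}(t+1/t))\,dt$. Setting $z = 2\sqrt{uv}$ so that $z/2 = \sqrt{uv}$ yields $\int_0^\infty e^{-\sqrt{uv}(s+1/s)}\,ds = 2 K_1(2\sqrt{uv})$, and combining with the change of variables gives the claimed identity $\int_0^\infty e^{-(uy+v/y)}\,dy = 2\sqrt{v/u}\,K_1(2\sqrt{uv})$.

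There is no genuine obstacle here; the lemma is essentially a textbook Bessel integral, and the only work is the rescaling that symmetrizes the exponent. If one preferred not to cite the representation above, an equivalent route is to start from Macdonald's form $K_\nu(z) = \frac{1}{2}(z/2)^\nu \int_0^\infty t^{-\nu-1} e^{-t - z^2/(4t)}\,dt$, apply it with $\nu = -1$ and $z = 2\sqrt{uv}$, then substitute $t = us$, using the parity $K_{-1} = K_1$ to reach the same conclusion. Either way, the proof occupies at most a few lines.
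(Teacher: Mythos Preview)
Your argument is correct and follows essentially the same route as the paper: both rescale via $y = s\sqrt{v/u}$ to symmetrize the exponent into $\sqrt{uv}\,(s+1/s)$ and then identify the result with a standard integral representation of $K_1$. The only cosmetic difference is that the paper pushes one step further with $s=e^\theta$ to land on the hyperbolic form $K_1(z)=\int_0^\infty e^{-z\cosh\theta}\cosh\theta\,d\theta$, whereas you cite the multiplicative form $K_\nu(z)=\tfrac12\int_0^\infty t^{\nu-1}e^{-\tfrac{z}{2}(t+1/t)}\,dt$ directly; these are equivalent representations related by that same substitution.
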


\begin{proof}
Let $y = e^\theta \sqrt{v/u}$. Then, the integral becomes
\begin{align}
    \int_0^\infty \exp\left(-\left(u y + \frac{v}{y}\right)\right) d y &= \int_{-\infty}^\infty \exp\left(-\sqrt{u v}\left(e^\theta + e^{-\theta}\right)\right) \sqrt{v/u} e^{\theta} d \theta  \nonumber\\
    &= \sqrt{v/u} \int_{-\infty}^\infty \exp\left(-2\sqrt{u v}\cosh(\theta) \right) e^{\theta} d \theta  \nonumber\\
    &= \sqrt{v/u} \int_{-\infty}^\infty \exp\left(-2\sqrt{u v} \cosh(\theta) \right) (\cosh(\theta)+\sinh(\theta)) d \theta \nonumber\\
    &\stackrel{(*)}{=} 2 \sqrt{v/u} \int_{0}^\infty \exp\left(-2\sqrt{u v} \cosh(\theta) \right) \cosh(\theta) d \theta \nonumber\\
    &= 2 \sqrt{v/u} K_1 (2 \sqrt{u v})
\end{align}
where $(*)$ follows because $\cosh(\theta)$ is an even function while $\sinh(\theta)$ is an odd function.
\end{proof}

\begin{proof}[\textbf{Proof of Lemma \ref{sprt-finite}}]
Recall that $Z_i := X_i - (\mu_0+\eta)/2$ and $S_t = \sum_{i=1}^t Z_i$. By assumption on ${\cal M}_1$, let $Z_t$ have mean $\Delta_t \geq \Delta$ for any $t$ under measure $P_{1,t}$. Fix $b > 0$. Define the supplementary stopping time
\begin{equation}
    \Bar{\tau}'(b) := \inf\left\{t \geq 1: S_t \geq b \right\}.
\end{equation}
Consider $t > t_0 := \floor{b / \Delta}$. Then,
\begin{align}
    \Prob{1}{P_0,\bm{P_1}}{\Bar{\tau}'(b) > t} &= \bm{P_1}\{\Bar{\tau}'(b) > t\} \nonumber\\
    &= \bm{P_1}\left\{\sum_{i=1}^t Z_i < b\right\} \nonumber\\
    &= \bm{P_1}\left\{\sum_{i=1}^t (Z_i - \Delta) < b - t \Delta \right\} \nonumber\\
    &\leq \bm{P_1}\left\{\abs{\sum_{i=1}^t (Z_i - \Delta_i)} > t \Delta - b\right\} \nonumber\\
    &\stackrel{(*)}{\leq} 2 \exp \left(-\frac{2 \Delta^2 (t - b / \Delta)^2}{t}\right)
\end{align}
where $(*)$ follows from Hoeffding's inequality.

Using the same technique as the proof of lemma \ref{mct:cross_fa},
\begin{equation}
    \int_0^\infty \exp \left(-\frac{2 \Delta^2 (t - b / \Delta)^2}{t}\right) d t = \frac{2 b}{\Delta} e^{4 b \Delta} K_1(4 b \Delta) < \infty
\end{equation}
where $K_1(z)$ is the modified Bessel function of the second kind with order 1. Hence,
\begin{align}
    \E{}{\bm{P_1}}{\Bar{\tau}'(b)} &= \sum_{t=0}^\infty \bm{P_1}\left\{\Bar{\tau}'(b) > t\right\} \nonumber\\
    &\leq \sum_{t=0}^{t_0} \bm{P_1}\left\{\Bar{\tau}'(b) > t\right\} + \int_{t_0}^\infty 2 \exp \left(-\frac{2 \Delta^2 (t - b / \Delta)^2}{t}\right) d t \nonumber\\
    &< \infty.
\end{align}
Therefore, for any $P_{1,t} \in {\cal M}_1$, $\E{1}{P_0,\bm{P_1}}{\Bar{\tau}'(b)} = \E{}{\bm{P_1}}{\Bar{\tau}'(b)} < \infty$. Finally, it follows directly that $\E{1}{P_0,\bm{P_1}}{\tau(b)} \leq \E{1}{P_0,\bm{P_1}}{\Bar{\tau}'(b)} < \infty$. \qedhere
\end{proof}

\begin{proof}[\textbf{Proof of Lemma \ref{walds_non_stat}}]
For each $t \geq 1$, let $Z_t^+ := \max\{0,Z_t\}$ and $Z_t^- := -\min\{0,Z_t\}$. Note that $Z_t^+,Z_t^- \geq 0$ and $Z_t = Z_t^+ - Z_t^-$. Therefore,
\begin{equation}
    \E{}{\bm{P}}{\lim_n \sum_{t=1}^n Z_t^+ \ind{t \leq T}} = \lim_n \sum_{t=1}^n \E{}{\bm{P}}{Z_t^+ \ind{t \leq T}}
\end{equation}
by Monotone Convergence Theorem, since $\sum_{t=1}^n Z_t^+ \ind{t \leq T}$ is non-decreasing in $n$. The same argument applies to $Z_t^-$. Hence,
\begin{align}
    \E{}{\bm{P}}{\sum_{t=1}^T Z_t} &= \E{}{\bm{P}}{\sum_{t=1}^T Z_t^+} - \E{}{\bm{P}}{\sum_{t=1}^T Z_t^-} \nonumber\\
    &=\E{}{\bm{P}}{\sum_{t=1}^\infty Z_t^+ \ind{t \leq T}} - \E{}{\bm{P}}{\sum_{t=1}^\infty Z_t^- \ind{t \leq T}} \nonumber\\
    &= \sum_{t=1}^\infty \E{}{\bm{P}}{Z_t^+ - Z_t^-} \E{}{\bm{P}}{\ind{t \leq T}} \nonumber\\
    &\geq \Delta \sum_{t=1}^\infty \E{}{\bm{P}}{\ind{T \geq t}} \nonumber\\
    &= \E{}{\bm{P}}{T} \Delta.
\end{align}
\end{proof}

\end{document}